\newcommand{\arc}{\wideparen}
\newcommand{\Minsumradius}{\textsc{MinSumRadius}\xspace}
\newcommand{\Minsumdiameter}{\textsc{MinSumDiameter}\xspace}
\newcommand{\Center}{\textsc{Center}\xspace}
\newcommand{\Means}{\textsc{Means}\xspace}
\newcommand{\Oh}{\ensuremath{\mathcal{O}}}
\newcommand{\RR}{\ensuremath{\mathbb{R}}}
\newcommand{\med}{\mathrm{MED}}
\newcommand{\reals}{\mathbb{R}}
\newcommand{\MEB}{\ensuremath{\mathrm{MEB}}}
\title{Clustering with Few Disks to Minimize the Sum~of~Radii}
\author{Mikkel Abrahamsen}{University of Copenhagen, Denmark}{miab@di.ku.dk}{}{is supported by Starting Grant 1054-00032B from the Independent Research Fund Denmark under the Sapere Aude research career programme and is part of Basic Algorithms Research Copenhagen (BARC), supported by the VILLUM Foundation grant 16582.}
\author{Sarita de Berg}{Utrecht University, The Netherlands}{S.deBerg@uu.nl}{}{}
\author{Lucas Meijer}{Utrecht University, The Netherlands}{l.meijer2@uu.nl}{}{is supported by the Netherlands Organisation for Scientific Research (NWO) under project no. VI.Vidi.213.150.}
\author{André Nusser}{University of Copenhagen, Denmark}{annu@di.ku.dk}{}{is part of Basic Algorithms Research Copenhagen (BARC), supported by the VILLUM Foundation grant 16582.}
\author{Leonidas Theocharous}{Eindhoven University of Technology, The Netherlands}{l.theocharous@tue.nl}{}{ is supported by the Dutch Research Council (NWO) through Gravitation-grant NETWORKS-024.002.003.}
\authorrunning{M. Abrahamsen, S. de Berg, L. Meijer, A. Nusser, L. Theocharous}
\keywords{geometric clustering, minimize sum of radii, covering points with disks} 
\begin{document}

\maketitle

\begin{abstract}
Given a set of $n$ points in the Euclidean plane, the $k$-\textsc{MinSumRadius} problem asks to cover this point set using $k$ disks with the objective of minimizing the sum of the radii of the disks.
After a long line of research on related problems, it was finally discovered that this problem admits a polynomial time algorithm [GKKPV~'12]; however, the running time of this algorithm is $\mathcal{O}(n^{881})$, and its relevance is thereby mostly of theoretical nature.
A practically and structurally interesting special case of the $k$-\textsc{MinSumRadius} problem is that of small $k$.
For the $2$-\textsc{MinSumRadius} problem, a near-quadratic time algorithm with expected running time $\mathcal{O}(n^2 \log^2 n \log^2 \log n)$ was given over 30 years ago [Eppstein~'92].

We present the first improvement of this result, namely, a near-linear time algorithm to compute the $2$-\textsc{MinSumRadius} that runs in expected $\mathcal{O}(n \log^2 n \log^2 \log n)$ time. We generalize this result to any constant dimension $d$, for which we give an $\mathcal{O}(n^{2-1/(\lceil d/2\rceil + 1) + \varepsilon})$ time algorithm.
Additionally, we give a near-quadratic time algorithm for $3$-\textsc{MinSumRadius} in the plane that runs in expected $\mathcal{O}(n^2 \log^2 n \log^2 \log n)$ time.
All of these algorithms rely on insights that uncover a surprisingly simple structure of optimal solutions: we can specify a linear number of lines out of which one separates one of the clusters from the remaining clusters in an optimal solution. 
\end{abstract}

\section{Introduction}

Clustering seeks to partition a data set in order to obtain a deeper understanding of its structure.
There are different clustering notions that cater to different applications.
An important subclass is geometric clusterings \cite{geometric_clusterings}.
In their general form, as defined in \cite{geometric_clusterings}, geometric clusterings try to partition a set of input points in the plane into $k$ clusters such that some objective function is minimized.
More formally, let $f$ be a symmetric $k$-ary function and $w$ a non-negative function over all subsets of input points.
The geometric clustering problem is defined as follows:
Given a set $P$ of points in the Euclidean plane and an integer $k$, partition $P$ into $k$ sets $C_1, \dots, C_k$ such that $f(w(C_1), \dots, w(C_k))$ is minimized.
Popular choices for the weight function $w$ are the radius of the minimum enclosing disk and the sum of squared distances from the points to the mean.
The function $f$ aggregates the weights of all clusters, for example using the maximum or the sum.

Arguably the most popular types of clustering in this setting are $k$-\Center clustering (with $f$ being the maximum, and $w$ being the radius of the minimum enclosing disk) and $k$-\Means clustering (with $f$ being the sum, and $w$ being the sum of squared distances to the mean).
While geometric clusterings have the advantage that their underlying objective function can be very intuitive, unfortunately the cluster boundaries might sometimes be slightly more complex.
For example, the disks defining the clusters can have a large overlap in $k$-\Center clustering, while in $k$-\Means clustering the boundaries are defined by the Voronoi diagram on the mean points of the clusters (whose complexity has an exponential dependency on the dimension).
An instance of geometric clustering for which the cluster boundaries implicitly consist of non-overlapping disks is $k$-\Minsumradius.
In $k$-\Minsumradius clustering we want to minimize the sum of radii of the $k$ disks with which we cover the input point set.
In the geometric clustering setting, this means that the function $f$ is the sum and $w$ is the radius of the minimum enclosing disk.
This is the notion of clustering that we consider in this work.

While $k$-\Center and $k$-\Means are both NP-hard in the Euclidean plane when $k$ is part of the input \cite{kmeans_nphard_plane,kcenter_nphard}, the $k$-\Minsumradius problem can be solved in polynomial time \cite{siam_jc}. Interestingly, this result extends to any Euclidean space when its dimension is considered a constant.
However, the bound on the running time
\footnote{As the radii of minimum enclosing disks can contain square roots, the value of a solution is a sum of square roots. However, it is not known how to compare two sums of square roots in polynomial time in the number of elements. The running time merely counts the number of such comparisons. Additionally, the setting is slightly different than ours in that it assumes that the centers of the clusters have to be located at input points. We believe that this algorithm also carries over to the setting of unrestricted cluster centers, incurring some additional polynomial factors in the running time.} is $\Oh(n^{881})$.
Whether there exists a practically fast exact algorithm for $k$-\Minsumradius can be considered one of the most interesting questions in geometric clustering.
While the running time of the known polynomial-time algorithm can likely be slightly improved using the same techniques, the balanced separators that are used in the algorithm inevitably lead to a high running time.
Thus, we believe that further structural insights into the problem --- especially with respect to separators --- are needed to greatly reduce the exponent of the polynomial running time.

In order to obtain a deeper understanding of the problem structure and due to its practical relevance, in this work we consider the $k$-\Minsumradius problem for small values of $k$.
The importance of this setting is reflected in the extensive work that was conducted in the analogous setting for the $k$-\Center and $k$-\Means problem, especially for the case of two clusters \cite{2center_quadratic,2means_nphard1,2center_linear3,2center_linear5,2means_nphard2,2center_linear2,kmeans_fpt,2center_linear1,2center_linear4} --- also called \emph{bi-partition}.
Bi-partition problems are of interest on their own, but they can additionally be used as a subroutine in hierarchical clustering methods.
Even more interestingly, while near-linear time algorithms for $2$-\Center clustering received a lot of attention \cite{2center_linear3,2center_linear2,2center_linear5,2center_linear1,2center_linear4}, the best known algorithm for $2$-\Minsumradius still has near-quadratic expected running time $\Oh(n^2 \log^2 n \log^2 \log n)$ \cite{k2quadratic}, which has seen no improvement in the last 30 years, despite significant work on related problems.

We are aware of no reason why $2$-\Center should deserve more attention than $2$-\Minsumradius, and we believe that the preference for the former problem may have arisen merely out of habit or historical trends in research focus.
The idea that similar data points are grouped together in the same cluster lies at the heart of clustering and, in particular, this is the reason why clustering is used for classification tasks in machine learning.
We note that in this regard, as shown in Figure \ref{fig:5}, optimal solutions to $2$-\Minsumradius often result in more natural clusters than for $2$-\Center.

\begin{figure}
\centering
\includegraphics[page=5]{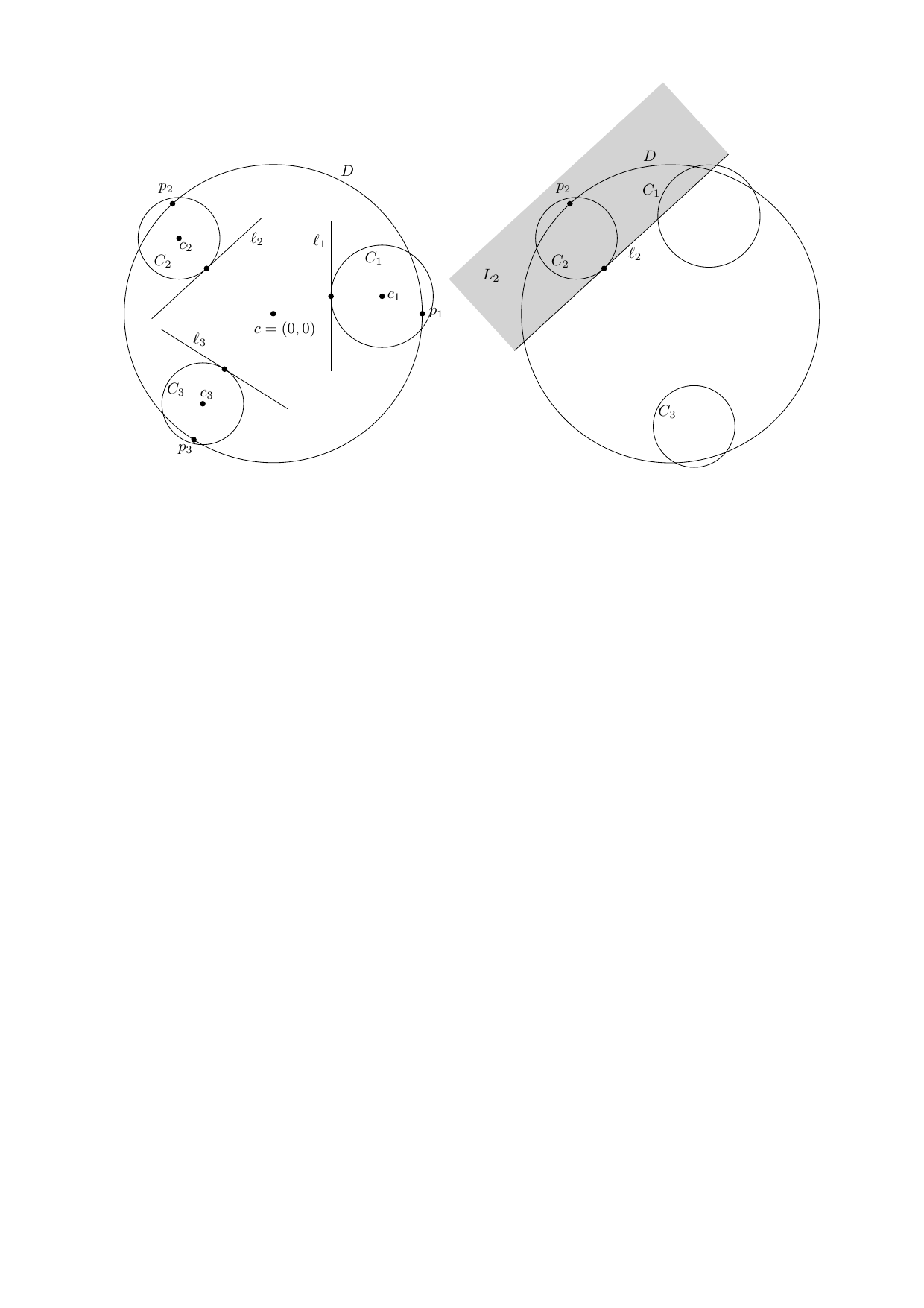}
\caption{The optimal $2$-\Center clustering (left) compared to the optimal $2$-\Minsumradius clustering (right) for the same point set. In this example $2$-\Minsumradius clustering better captures the structure of the point set than $2$-\Center clustering.}
\label{fig:5}
\end{figure}

\paragraph*{Our results}
In this work, we break the quadratic barrier for $2$-\Minsumradius in the Euclidean plane by presenting a near-linear time algorithm with expected running time $\Oh(n \log^2 n \log^2 \log n)$. Our method actually extends to any constant dimension $d\geq 3$, again yielding a subquadratic algorithm with running time $\mathcal{O}(n^{2-1/(\lceil d/2\rceil + 1) + \varepsilon})$. 
Moreover, we extend our structural insights to planar $3$-\Minsumradius\ --- matching the previously best $2$-\Minsumradius running time --- and give an algorithm with expected $\Oh(n^2 \log^2 n \log^2 \log n)$ running time, which is the first non-trivial result on this special case that we are aware of. The running time for planar $2$-\Minsumradius and $3$-\Minsumradius can be made deterministic by using the deterministic algorithm to maintain a minimum enclosing ball~\cite{k2quadratic}, which increases the running times to $\Oh(n \log^4n)$ and $\Oh(n^2\log^4 n)$, respectively.

The main technical contribution leading to these results are structural insights that simplify the problem significantly.
Concretely, we show that the points on the boundary of the minimum enclosing disk (or ball, in higher dimensions) of the point set induce a constant number of directions such that there is a line (or hyperplane) with one of those directions that separates one cluster from the other clusters in an optimal solution.
As there are only linearly many combinatorially distinct separator lines for each direction, we have linearly many separators in total that we have to consider. Note that this is the main difference to the previously best algorithm for planar $2$-\Minsumradius~\cite{k2quadratic}, which considered quadratically many separators.
We then check all clusterings induced by these separators using an algorithm from \cite{k2quadratic} to dynamically maintain a minimum enclosing disk and, in the $k=3$ case, we use our $2$-\Minsumradius algorithm as subroutine. For the higher-dimensional $2$-\Minsumradius problem, we similarly use an algorithm to maintain a minimum enclosing ball in any dimension $d\geq 3$ \cite{dynamic91}. 
While our algorithms are interesting in their own right, we additionally hope to enable a better understanding of the general case by uncovering this surprisingly simple structure of separators in the cases $k\in\{2,3\}$.

In Section~\ref{sec:k2}, we discuss the near-linear time algorithm for $2$-\Minsumradius, including the sub-quadratic time algorithm for $d \geq 3$.
And in Section~\ref{sec:k3} we discuss the near-quadratic time algorithm for $3$-\Minsumradius.
We conclude our work in Section~\ref{sec:conclusion}.

\paragraph*{Other related work}
There is a multitude of work on more general or similar problems. In a general metric space, constant-factor approximation algorithms for $k$-\Minsumradius have been known for a long time \cite{CharikarP01}. Recently, a $3$-approximation algorithm was presented, which also works in the setting where outliers are allowed \cite{DBLP:journals/corr/abs-2311-06111}.

In \cite{geometric_clusterings}, the authors consider the problem of clustering a set of points into $k$ clusters with an objective function that is monotonically increasing in the diameters or radii of the clusters.
Concretely, they give an algorithm with running time $\Oh(n^{6k})$.
In the same work, the authors consider separators of these types of clusterings.
However, as opposed to our work, the authors rather focus on the existence of separators in their general setting, while we try to reduce the number of relevant separators to consider in $k$-\Minsumradius.

In \cite{alpha}, a generalized version of $k$-\Minsumradius is studied, namely, that of minimizing the sum of radii taken to the power of $\alpha$.
The authors show NP-hardness for $\alpha > 1$ in this setting, but they give a polynomial-time algorithm for the special case in which all centers of disks have to lie on a line.

Another similar setting is that of minimizing the sum of diameters of the clusters, which we call $k$-\Minsumdiameter.
We want to emphasize that when referring to the radius of a cluster, this refers to the radius of the minimum enclosing disk, while when referring to the diameter, it commonly refers to the distance between the furthest pair of points in the cluster.
In contrast to the case of $k$-\Minsumradius, in $k$-\Minsumdiameter the cluster boundaries do not behave as nicely.
For the $2$-\Minsumdiameter problem, a near-linear time algorithm in the Euclidean plane was given more than 30 years ago~\cite{DBLP:journals/comgeo/Hershberger92a}.
However, for this problem, different techniques seem to be relevant, as farthest-point Voronoi diagrams are utilized rather than separators.
A similar problem to $k$-\Minsumdiameter is the problem of minimizing the maximum diameter of the clusters, which has a near-quadratic time algorithm for the case of three clusters~\cite{DBLP:journals/comgeo/HagauerR97}. 

Opposed to enclosing points using disks, there is a class of problems that aims at enclosing a set of points with multiple closed polylines while minimizing the sum of the perimeters. This problem and similar variants were studied in the general setting of arbitrarily many clusters~\cite{fast_fencing} as well as in the setting of two clusters~\cite{DBLP:journals/dcg/AbrahamsenBBMM20}.

Finally, very recently there was also some work to determine the fine-grained running times of the $k$-\Center problem --- partly for other shapes than disks --- for small values of~$k$~\cite{DBLP:conf/icalp/ChanHY23}.

\section{Preliminaries}
For a set of points $Q\subset \reals^d$, let $\Call{MEB}{Q}$ be the minimum enclosing ball that contains all points of $Q$.
We denote by $r(Q)$ the radius of $\Call{MEB}{Q}$. For $d=2$, we denote the minimum enclosing ball, which is a disk, by $\Call{MED}{Q}$.

\begin{definition}[$k$-\Minsumradius]
Let $P$ be a set of $n$ points in $\reals^d$ and $k$ be a positive integer. The $k$-\Minsumradius problem asks to partition $P$ into $k$ clusters $C_1,C_2,...,C_k$ such that $\sum_{i=1}^k r(C_i)$ is minimized.
\end{definition}
Throughout the paper, we let $D$ denote the minimum enclosing ball of the input set of our $k$-\Minsumradius instance, i.e., $D \coloneqq \Call{MEB}{P}$, and let $c$ be the center of $D$.
We say that a point set $Q$ defines a ball $D'$ if $\Call{MEB}{Q} = D'$. 
Given $D$ and a point $p$ on the boundary of this ball, we define $p^*$ to be the diametrically-opposing point of $p$ on $D$.
We define $d(p)$ to be the diameter of $D$ that contains $p$, i.e., $d(p)$ is the segment from $p$ to $p^*$. 

The next lemma states that in an optimal solution clusters are well-separated, in the sense that their minimum enclosing balls are disjoint.
\begin{lemma} \label{non-intersecting}
    Given a $k$-\Minsumradius instance, there exists an optimal solution with clusters $C_1, \dots, C_k$ such that $\Call{MEB}{C_i} \cap \Call{MEB}{C_j} = \emptyset$ for all distinct $i,j \in [k]$.
\end{lemma}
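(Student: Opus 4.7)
The plan is a reassignment argument based on the radical hyperplane between two intersecting MEBs. Starting from any optimal solution, I will show that if some two MEBs overlap, we can repartition the affected points without increasing the sum of radii, and combine this with a tie-breaking argument to obtain an optimal solution in which all MEBs are pairwise disjoint.

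The key geometric ingredient is the following. Suppose two closed balls $B_i, B_j \subset \reals^d$ with centers $c_i, c_j$ and radii $r_i, r_j$ intersect without either containing the other. Let $H$ be their radical hyperplane, the zero set of $\pi_i(x) - \pi_j(x)$, where $\pi_i(x) = \|x - c_i\|^2 - r_i^2$ is the power of $x$ with respect to $B_i$. A short algebraic calculation shows that $H$ is perpendicular to the segment $c_i c_j$, and that on the $c_i$-side of $H$ one has $\pi_i < \pi_j$. In particular, any point of $B_j$ on the $c_i$-side of $H$ automatically satisfies $\pi_i \leq 0$, so it lies in $B_i$; and symmetrically on the $c_j$-side.

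Take any optimal partition and assume some $B_i := \Call{MEB}{C_i}$ and $B_j := \Call{MEB}{C_j}$ intersect. The case where one ball contains the other is easier and handled separately by absorbing $C_j$ into $C_i$ (whose MEB is unchanged) and reassigning a single point to keep the cluster nonempty without increasing the sum. Otherwise, let $H$ be the radical hyperplane, and define $C'_i$ and $C'_j$ to consist of the points of $C_i \cup C_j$ lying on the $c_i$-side and the $c_j$-side of $H$, respectively (with ties on $H$ broken arbitrarily). By the geometric fact above, $C'_i \subseteq B_i$ and $C'_j \subseteq B_j$, so $r(C'_i) + r(C'_j) \le r(C_i) + r(C_j)$, and the reassigned partition is again optimal.

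The main obstacle is that $\Call{MEB}{C'_i}$ and $\Call{MEB}{C'_j}$ are only guaranteed to sit inside $B_i$ and $B_j$, so they themselves may still overlap. To close this gap, I would restrict attention from the outset to an optimal partition that additionally minimizes a well-founded secondary measure, for instance the sum of squared radii $\sum_i r(C_i)^2$ or the total volume $\sum_i \mathrm{vol}(B_i)$. The delicate step is to verify that the radical-hyperplane reassignment strictly decreases this secondary measure whenever two MEBs intersect --- intuitively, some point must cross $H$ because the two MEBs genuinely overlap, and this crossing forces one of the new MEBs to be strictly smaller or recentered. Once this strict decrease is established, the secondary-minimal optimum cannot have any intersecting pair of MEBs, completing the proof.
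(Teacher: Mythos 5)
There is a genuine gap, and it sits exactly where you flag it: the ``delicate step'' that the radical-hyperplane reassignment strictly decreases a secondary measure is not only unverified but false. The reassignment can be a no-op while the two MEBs still intersect, because a minimum enclosing ball can extend well beyond the convex hull of its points. For instance, take $C_i=\{(-2,1),(-2,-1)\}$ and $C_j=\{(0,1),(0,-1)\}$: their MEBs are unit balls centered at $(-2,0)$ and $(0,0)$, which meet at $(-1,0)$, yet the radical hyperplane $x_1=-1$ already separates the two point sets, so $C_i'=C_i$ and $C_j'=C_j$ and no measure (primary or secondary) changes. So a secondary-minimal optimum can still have intersecting MEBs under your operation, and the argument does not close. (Your reassignment also only guarantees $\MEB(C_i')\subseteq B_i$, which bounds each radius separately but, as you note, gives no control on disjointness.)

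The paper's proof avoids all of this with a much blunter move: if $\MEB(C_i)\cap\MEB(C_j)\neq\emptyset$, simply \emph{merge} the two clusters into $C'=C_i\cup C_j$. Since the two balls intersect, their centers are at distance at most $r(C_i)+r(C_j)$, so $C'$ fits in a ball of radius $r(C_i)+r(C_j)$ and the objective does not increase; the vacated cluster is kept as an empty cluster (the lemma does not require clusters to be nonempty, and $\MEB(\emptyset)$ intersects nothing). Repeating this terminates because the number of nonempty clusters strictly drops each time. If you want to salvage your approach, you would need to replace the radical-hyperplane step by something that handles the case where the point sets are already separated but the balls are not --- at which point merging is the natural fix, and you are back to the paper's argument.
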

\begin{proof}
Consider an arbitrary optimal solution and assume that there are two clusters $C_i, C_j$ for which $\Call{MEB}{C_i} \cap \Call{MEB}{C_j} \neq \emptyset$.
We can replace the clusters $C_i, C_j$ by a single cluster $C' \coloneqq C_i \cup C_j$. Note that this does not increase the total clustering cost as $r(C') \leq r(C_i) + r(C_j)$.
Hence, we obtain a clustering with one less cluster (or, in other words, one more empty cluster).
We can recursively apply this argument until we either end up with a single cluster (which trivially satisfies the lemma) or all clusters have pairwise non-overlapping minimum enclosing balls.
\end{proof}

We also make use of the following (slightly differently phrased) known lemma, which intuitively states that any plane through the center of a minimum enclosing ball has points on both sides.
\begin{lemma}{\cite[Lemma 2.2]{DBLP:conf/stoc/BadoiuHI02}} \label{lem:no_empty_half}
    Let $Q \subset \RR^d$ with $|Q| = d+1$ be a set of points and let $\MEB(Q)$ be their minimum enclosing ball. There exists no hyperplane $H$ through the center of $\MEB(Q)$ such that all points of $Q$ are strictly on one side of $H$.
\end{lemma}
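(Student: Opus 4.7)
The plan is to argue by contradiction via a perturbation of the center of $\MEB(Q)$. Suppose some hyperplane $H$ through the center $c$ of $\MEB(Q)$ has all points of $Q$ strictly on one side. Let $n$ be the unit normal of $H$ pointing towards that side, and let $r$ denote the radius of $\MEB(Q)$. By assumption, $(q-c)\cdot n > 0$ for every $q \in Q$, and since $Q$ is finite we may set $\delta \coloneqq \min_{q \in Q} (q-c)\cdot n > 0$.

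The core computation is to translate the center slightly along $n$. For any $\varepsilon > 0$, set $c' \coloneqq c + \varepsilon n$ and expand
\[
\|q - c'\|^2 \;=\; \|q-c\|^2 \;-\; 2\varepsilon\,(q-c)\cdot n \;+\; \varepsilon^2 \;\le\; r^2 - 2\varepsilon\delta + \varepsilon^2.
\]
For any sufficiently small $\varepsilon \in (0, 2\delta)$, the right-hand side is strictly less than $r^2$, so every point of $Q$ lies at distance strictly less than $r$ from $c'$. Consequently, the ball centered at $c'$ of radius $\max_{q \in Q} \|q-c'\| < r$ is an enclosing ball of $Q$ with radius strictly smaller than $r$, contradicting the fact that $\MEB(Q)$ is the \emph{minimum} enclosing ball.

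The only step that requires care is arguing that the slackness $\delta$ is genuinely positive, which uses both the strictness of the half-plane condition and the finiteness of $Q$; note that the cardinality assumption $|Q|=d+1$ plays no role in the argument and is merely the form in which the lemma is invoked later.
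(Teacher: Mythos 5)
Your perturbation argument is correct: moving the center by $\varepsilon n$ strictly decreases all squared distances by at least $2\varepsilon\delta - \varepsilon^2 > 0$ for small $\varepsilon$, contradicting minimality of the enclosing ball. The paper itself gives no proof of this lemma --- it is cited as Lemma~2.2 of B\u{a}doiu, Har-Peled, and Indyk --- and your argument is essentially the standard proof found there; your observation that the hypothesis $|Q| = d+1$ is never used is also accurate (the statement holds for any finite $Q$).
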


We say that a $(d-1)$-dimensional hyperplane $\ell$ \emph{separates} two sets $A,B\subset\mathbb R^d$ if $A$ and $B$ are contained in different closed half-spaces defined by $\ell$ and $\ell\cap A=\emptyset$ or $\ell\cap B=\emptyset$.
In our algorithms for 2-\Minsumradius and 3-\Minsumradius in the plane, we consider a sweep-line $\ell$ perpendicular to a direction defined by points on the boundary of $D$, keeping track of the points in $P$ on both sides of $\ell$.
Our definition of a separator is convenient, as it allows us to include all input points on $\ell$ together with the points on one of the sides, thus avoiding any general position assumptions.


\section{Near-linear algorithm for 2-\Minsumradius}
\label{sec:k2}
In this section, we present our algorithm for $2$-\Minsumradius for both the plane and higher dimensions. First, we consider the problem in the plane to build an intuitive understanding, which is also applicable to the $3$-\Minsumradius problem, before generalizing to higher dimensions in Section~\ref{sec:k2_generalize}. For the plane, we prove the following result:
\begin{theorem}\label{thm:k2}
For a set $P$ of $n$ points in the Euclidean plane, an optimal 2-\Minsumradius clustering can be computed in expected $\Oh(n \log^2 n \log^2 \log n)$ or worst-case $\Oh(n\log^4n)$ time.
\end{theorem}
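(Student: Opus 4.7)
The plan is to reduce the problem to evaluating $O(n)$ candidate separator lines, each processed in polylogarithmic time via a dynamic minimum enclosing disk data structure. Let $D=\Call{MED}{P}$ and let $S\subseteq P\cap\partial D$ be the at most three defining points of $D$. The core of the argument is a structural lemma, to be proven separately, asserting that in some optimal 2-clustering $\{C_1,C_2\}$ the two parts are separated by a line whose orientation lies in an $O(1)$-size family of directions determined by $S$, for instance perpendicular to one of the diameters $d(p)$ for $p\in S$. Granting this, a sweep line in each of the $O(1)$ directions passes through only $n-1$ combinatorially distinct positions, producing $O(n)$ candidate partitions of $P$ in total.

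\emph{Proving the structural lemma.} I would begin from an optimal clustering with $\Call{MED}{C_1}\cap\Call{MED}{C_2}=\emptyset$, guaranteed by Lemma~\ref{non-intersecting}, and let $\ell_0$ be any common separating line of the two disks. Applying Lemma~\ref{lem:no_empty_half} to the defining set $S$ of $D$, no line through the center $c$ of $D$ has all of $S$ strictly on one side, so both clusters contain defining points and $\ell_0$ is forced to cross $D$ in a constrained way. Using the disjointness of the two disks $\Call{MED}{C_i}$ and the fact that the slab-like region between them contains no input points, I would argue that $\ell_0$ can be rotated into an orientation in the canonical $O(1)$-size set induced by $S$ without any point of $P$ being swept across it, and then translated along that direction to the first combinatorially distinct sweep-line position. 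The cases $|S|=2$ (antipodal pair, yielding one candidate direction) and $|S|=3$ (acute triangle inscribed in $\partial D$, yielding three candidate directions) would be handled separately, with the degeneracy of input points lying exactly on the candidate separator absorbed by the closed-halfplane convention from the preliminaries.

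\emph{Algorithmic evaluation.} For each of the $O(1)$ candidate directions I sort $P$ by projection in $O(n\log n)$ time and sweep a perpendicular line across the $O(n)$ resulting events. Two minimum enclosing disks are maintained by the dynamic data structure of~\cite{k2quadratic}: at each event one side undergoes an insertion and the other a deletion in expected $O(\log^2 n \log^2\log n)$ time, or in worst-case $O(\log^4 n)$ time using the deterministic variant; after every update the value $r(C_1)+r(C_2)$ is read off and compared against the best so far. Summing over $O(n)$ events and $O(1)$ directions yields the two running times claimed by the theorem. The main obstacle is clearly the structural lemma: once the $O(n)$-sized family of candidate separators is justified, the sweep itself is a straightforward application of the dynamic minimum enclosing disk machinery.
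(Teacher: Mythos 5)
The algorithmic half of your proposal --- $\Oh(1)$ canonical directions determined by the defining points of $D$, a sweep with $\Oh(n)$ combinatorially distinct positions per direction, and the dynamic minimum enclosing disk structure of~\cite{k2quadratic} giving $\Oh(\log^2 n\log^2\log n)$ expected or $\Oh(\log^4 n)$ worst-case time per update --- is exactly the paper's Algorithm~\ref{alg:k2} and its analysis, and that part is fine. You also correctly identify that everything hinges on the structural lemma (the paper's Lemma~\ref{lem:k2_separator}).

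However, your sketch of that lemma contains a genuine gap. You propose to start from an arbitrary line $\ell_0$ separating the two disjoint disks $\med(C_1)$ and $\med(C_2)$ and to \emph{rotate} it into one of the canonical orientations ``without any point of $P$ being swept across it.'' Nothing in what you wrote justifies this: the orientations of lines separating two disjoint disks form an angular interval bounded by the directions of their internal common tangents, and there is no a priori reason this interval contains a direction perpendicular to some $d(p)$, $p\in S$. That containment is precisely the content of the lemma, so the rotation step assumes what must be proved; the emptiness of the slab between the two disks is not relevant to it. The paper argues in the opposite direction: it first pins down \emph{which} canonical direction to use (this depends on how the defining points split between the clusters, a case analysis your sketch does not perform), then directly takes the line perpendicular to $d(p_1)$ tangent to $\med(C_1)$ on the side away from $p_1$, and shows that if this line meets $\med(C_2)$ then the projections of the two disks onto the diameter $d(p_1)$ cover it entirely, forcing $r(C_1)+r(C_2)\ge r(D)$ and contradicting optimality. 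The three-defining-point case is reduced to the antipodal case by inserting the artificial point $p_3^\ast$ (the antipode of the defining point that is alone in its cluster), which lies on the minor arc between the other two defining points and hence does not change the optimum. A smaller inaccuracy: you invoke Lemma~\ref{lem:no_empty_half} to conclude that both clusters contain defining points; the correct justification is that if all defining points lay in one cluster, that cluster's minimum enclosing disk would be $D$ itself, contradicting the assumption that the optimum is strictly below $r(D)$. Without the projection-based contradiction and the antipodal-point reduction, the structural lemma --- and hence the theorem --- is not established.
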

In Section~\ref{sub:algo_k2} we present the algorithm, and in Section~\ref{sec:k2_cuts} we prove the main structural result that our algorithm relies on.

\subsection{Algorithm}
\label{sub:algo_k2}

Our algorithm uses the insight that there exist a linear number of separators, such that one of them separates the points in cluster $C_1$ from those in cluster $C_2$.

\begin{algorithm}[t]
\begin{algorithmic}[1]
\Procedure{2-\Minsumradius}{$P$}
\State $D \gets $ minimum enclosing disk of $P$
\State $p_1, p_2, p_3$ $\gets$ points defining $D$
\State $S_{\text{OPT}} \gets (D, \emptyset)$ \Comment{single cluster containing all of $P$}
\For{each diameter $d(p_i)$ at $p_i$}
    \State initialize dynamically maintained MEDs: $A \gets \Call{MED}{\emptyset}, B \gets \Call{MED}{P}$
	\For{each separator orthogonal to $d(p_i)$ through a $p \in P$ in sorted order}
        \State update $A$ adding $p$, and update $B$ removing $p$
		\If{$(A,B)$ is a better solution than $S_{\text{OPT}}$}
			\State $S_{\text{OPT}} \gets (A,B)$
		\EndIf
	\EndFor
\EndFor
\State \Return $S_{\text{OPT}}$
\EndProcedure
\end{algorithmic}
\caption{Near-linear time algorithm for 2-\Minsumradius.}
\label{alg:k2}
\end{algorithm}

\begin{lemma} \label{lem:k2_separator}
%
Consider a point set $P$ in the Euclidean plane and let $p_1, p_2, p_3$ be three points of $P$ that define the minimum enclosing disk of $P$ (with potentially $p_2 = p_3$).
There exists an optimal 2-\Minsumradius clustering $C_1, C_2$ of $P$ such that for a point $q\in\{p_1,p_2,p_3\}$ and a line $\ell$ orthogonal to $d(q)$, the line $\ell$ separates $C_1$ from $C_2$.
\end{lemma}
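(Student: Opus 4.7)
My strategy is to start from an arbitrary optimal clustering whose two minimum enclosing disks are disjoint, handle a degenerate case, and then argue that the defining point which is unique to its cluster gives the right separator direction.

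By Lemma~\ref{non-intersecting}, I fix an optimal clustering $C_1, C_2$ such that $D_1 \coloneqq \med(C_1)$ and $D_2 \coloneqq \med(C_2)$ are disjoint. The disjointness already yields \emph{some} separating line; the challenge is to pick one orthogonal to $d(q)$ for a suitable $q \in \{p_1, p_2, p_3\}$. I first dispose of the case in which all of $p_1, p_2, p_3$ lie in a single cluster, say $C_1$. Since $\{p_1, p_2, p_3\} \subseteq C_1 \subseteq P \subseteq D$, the disk $D_1$ both contains all defining points of $D$ (so $r(D_1)\ge r(D)$) and is the smallest enclosing disk of a subset of $D$ (so $r(D_1) \le r(D)$); uniqueness of the minimum enclosing disk then gives $D_1 = D$. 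Consequently every $2$-clustering costs at least $r(D)$, so the one-cluster solution $(P, \emptyset)$ is also optimal, and any line orthogonal to $d(p_1)$ placed beyond $P$ on the $p_1^*$ side separates $P$ from the empty cluster.

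In the remaining (main) case the defining points split between $C_1$ and $C_2$; by pigeonhole some $q \in \{p_1, p_2, p_3\}$ is the \emph{only} defining point in its cluster. Without loss of generality $q \in C_2$ and all other defining points are in $C_1$. I claim that an appropriate line orthogonal to $d(q)$ separates $C_1$ from $C_2$. Two structural inputs drive the argument: first, Lemma~\ref{lem:no_empty_half} places $c$ in the convex hull of $\{p_1, p_2, p_3\}$, so the line through $c$ perpendicular to $d(q)$ has $q$ strictly on one side and the remaining defining points on the opposite closed side; second, the disjointness of $D_1$ and $D_2$ produces a nonempty separating slab between them.

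The hard part will be verifying that the direction constraint (perpendicular to $d(q)$) and the position constraint (inside the gap between $D_1$ and $D_2$) can be satisfied by the \emph{same} line. My approach is to translate the line orthogonal to $d(q)$ along the direction $d(q)$, starting at $c$ and sweeping toward $q$. Because $q$ stays on the $C_2$-side of the sweep line while the other defining points stay on the $C_1$-side throughout the sweep, and because the disjointness of $D_1$ and $D_2$ forces the sweep line to pass through a gap at some intermediate position, that intermediate position should place the line inside the separating slab, giving the desired separator. A continuity/rotation argument combined with the convex-hull containment of $c$ prevents simultaneous failure across the (at most) three possible choices of $q$, so a suitable defining point always exists.
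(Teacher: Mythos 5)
Your choice of $q$ (the defining point that is alone in its cluster) is the right one, and your degenerate case is handled correctly, but the core of the lemma --- why a separator \emph{orthogonal to $d(q)$} must exist --- is not actually established, and one of your intermediate claims is false. First, the false claim: it is not true that the line through $c$ perpendicular to $d(q)$ has the other two defining points on the closed side opposite $q$. Take $q=p_3$ at angle $90^\circ$, $p_1$ at $170^\circ$, $p_2$ at $280^\circ$ on the unit circle; the center lies in their convex hull (so they can define the MED), yet $p_1$ lies strictly on the same side of the horizontal line through $c$ as $q$. Lemma~\ref{lem:no_empty_half} only forbids all three points from being strictly on one side; it does not sort them relative to $q$ in the way you need.

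Second, and more importantly, the step you yourself flag as ``the hard part'' is exactly the content of the lemma, and your sketch does not supply it. Disjointness of $\med(C_1)$ and $\med(C_2)$ gives a separating line in \emph{some} direction, but two disjoint disks are in general not separated by any line of a prescribed direction (two disks side by side are not separated by any horizontal line), so ``the disjointness forces the sweep line to pass through a gap at some intermediate position'' is precisely what must be proved, and the closing ``continuity/rotation argument'' is not an argument. The missing ingredient, which the paper supplies, is that the antipodal point $q^*$ lies on the minor arc of $\partial D$ between the other two defining points (this is where Lemma~\ref{lem:no_empty_half} is really used), and any disk containing those two points and contained in $D$ must contain that entire arc; hence $q^*\in\med(C_1)$ while $q\in\med(C_2)$. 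Only then do the two cluster disks anchor the two endpoints of the diameter $d(q)$, so that their projections onto $d(q)$ either leave a gap --- yielding the orthogonal separator --- or cover the whole diameter, forcing $r(C_1)+r(C_2)\ge r(D)$ and contradicting optimality. Without establishing that $\med(C_1)$ reaches $q^*$, the projection/sweep argument has no foothold, so the proof as proposed does not go through.
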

We prove this result in Section~\ref{sec:k2_cuts}.
We now explain our algorithm relying on Lemma~\ref{lem:k2_separator}; see Algorithm~\ref{alg:k2} for the pseudo-code.
\subparagraph{Algorithm description.} 
Let $p_1, p_2, p_3$ denote a triple of points in $P$ that define $D$ (with possibly $p_2 = p_3$).
These points can be computed in $\Oh(n)$ time~\cite{DBLP:journals/siamcomp/Megiddo83a}.
We try out every combinatorially distinct line orthogonal to $d(p_i)$ for $i \in \{1, 2, 3\}$ as a separator.
We consider the separators orthogonal to a specific $d(p_i)$ in sorted order such that in each step only one point jumps over the separator.
This ensures that the minimum enclosing disks on both sides of the separator only change by a single point per step, and we can therefore use a data structure to dynamically maintain them.
We then select the best solution found using these separators.

Correctness follows directly from Lemma~\ref{lem:k2_separator}, hence let us consider the running time.
We can maintain the minimum enclosing disks $A$ and $B$ in expected $\Oh(\log^2 n \log^2 \log n)$ or worst-case $\Oh(\log^4 n)$ amortized time per update~\cite{k2quadratic}.
So, checking all $n$ separators requires expected $\Oh(n \log^2 n \log^2 \log n)$ or $\Oh(n \log ^4 n)$ worst-case time.
As we only have three diameters $d(p_1),d(p_2),d(p_3)$ to handle, we can compute the optimal solution in the same time.

\subsection{Linear number of cuts}
\label{sec:k2_cuts}
In this subsection we prove Lemma~\ref{lem:k2_separator}, which states that it suffices to only check separators orthogonal to one of the diameters $d(p_1)$, $d(p_2)$, or $d(p_3)$.

\begin{proof}[Proof of Lemma~\ref{lem:k2_separator}]
Let $C_1,C_2$ be an optimal clustering.
If $r(C_1)+r(C_2)=r(D)$, then the clustering $P,\emptyset$ is also optimal and trivally satisfies the claim.
We therefore in the following assume that $r(C_1)+r(C_2)<r(D)$.
We consider two different cases, depending on whether the minimum enclosing disk $D$ of the input points is defined by two or more points.

\begin{figure}
    \centering
    \begin{minipage}[b]{.45\textwidth}
        \centering
        \includegraphics[page=2]{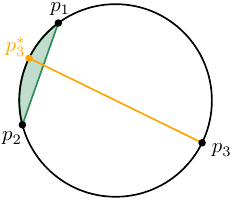}
        \subcaption{Case 1: Two points define $D$.}
        \label{fig:2points_define}
    \end{minipage}
    \begin{minipage}[b]{.45\textwidth}
        \centering
        \includegraphics[page=1]{vertical-separators.pdf}
        \subcaption{Case 2: At least three points define $D$.}
        \label{fig:3points_define}
    \end{minipage}
    \caption{Visualization of the two cases in the proof for the existence of a separator that is perpendicular to one of the diameters of $p_1, p_2, p_3$.}
\end{figure}

\subparagraph{Case 1: } \textit{Two points define $D$.}
We illustrate this case in Figure~\ref{fig:2points_define}.
Let $p_1$ and $p_2$ be the two points that define the minimum enclosing disk.
Then, $p_1$ must be diametrically opposing $p_2$ on $D$, so $d(p_1)=d(p_2)$.
As we assume that the optimal solution has value less than $r(D)$, we have that $p_1$ and $p_2$ must belong to different clusters; without loss of generality, let $p_1\in C_1$ and $p_2\in C_2$. 

Let $\ell$ be the line tangent to $\med(C_1)$ perpendicular to $d(p_1)$ furthest in direction $\overrightarrow{c-p_1}$ (recall that $c$ is the center of $D$).
If $\ell\cap \med(C_2)=\emptyset$, we are done.
Otherwise, the projections of the disks $\med(C_1)$ and $\med(C_2)$ on $d(p_1)$ cover all of the diameter, so $r(C_1)+r(C_2)\geq r(D)$,
which is a contradiction.

\subparagraph{Case 2: } \textit{At least three points define $D$.} \label{k2:c2}
We illustrate this case in Figure~\ref{fig:3points_define}. Let $p_1, p_2, p_3\in P$ be any three points that jointly define the minimum enclosing disk.
In any optimal solution, two out of these three points must be grouped together in a cluster.
Without loss of generality, assume that $p_1$ and $p_2$ are in the same cluster.

Any disk containing $p_1$ and $p_2$ that is defined by points in $D$ must contain all of $\arc{p_1p_2}$, the smallest of the two arcs on $D$ connecting $p_1$ and $p_2$.
We now add the artificial point $p_3^*$ to the point set --- the diametrically opposing point of $p_3$.
We have that $p_3^*\in \arc{p_1p_2}$ as otherwise $p_1, p_2, p_3$ would lie strictly inside one half of $D$ and could therefore not define the minimum enclosing disk by Lemma~\ref{lem:no_empty_half}.
Hence, adding $p_3^*$ to the point set does not change the optimal solution.
Thus, we reduced our problem to Case~1, where two points define the minimum enclosing disk.
\end{proof}

\subsection{Extension to constant dimensions} \label{sec:k2_generalize}

In this section, we show how to extend our algorithm for 2-\Minsumradius to any constant dimension $d$.

Our main structural lemma in this section is an analog of Lemma~\ref{lem:k2_separator} generalized to higher dimensions. From this lemma, it follows that we can essentially brute-force the clusters over a small number ($d+1$ to be precise) of directions for candidate planes that separate two optimal clusters.

\begin{lemma}\label{lem:k2_separator_R^d}
Consider a point set $P \subset \mathbb{R}^d$ and let $Q \subset P$, $|Q| \leq d+1$, be a set of points that defines the minimum enclosing ball of $P$, i.e., $\MEB(Q) = \MEB(P)$.
There exists an optimal 2-\Minsumradius clustering $C_1, C_2$ of $P$ such that for a point $q\in Q$ and a hyperplane $\ell$ orthogonal to $d(q)$, the hyperplane $\ell$ separates $C_1$ from $C_2$.
\end{lemma}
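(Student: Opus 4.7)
The plan is to mirror the two-case structure of the proof of Lemma~\ref{lem:k2_separator}, with $|Q|$ playing the role of the size of the minimum defining set of $D$. As in the planar proof, the case $r(C_1)+r(C_2)=r(D)$ is handled by the trivial clustering $(P,\emptyset)$, so I will assume $r(C_1)+r(C_2)<r(D)$ throughout.

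Case~1 ($|Q|=2$) will generalize to $\mathbb{R}^d$ verbatim: the two defining points $q_1,q_2$ are antipodal on $\partial D$, and under the strict cost assumption they must lie in different clusters, say $q_i\in C_i$, since otherwise one cluster's $\MEB$ would contain two antipodal points of $\partial D$ and hence coincide with $D$. The tangent-hyperplane argument then goes through exactly as in the plane: if the hyperplane perpendicular to $d(q_1)$ tangent to $\MEB(C_1)$ on the side opposite $q_1$ met $\MEB(C_2)$, then the projections of the two $\MEB$s onto $d(q_1)$ would cover its full length $2r(D)$, forcing $r(C_1)+r(C_2)\geq r(D)$ and contradicting strictness.

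For Case~2 ($|Q|\geq 3$), the plan is to reduce to Case~1 by augmenting $P$ with a single antipodal point. The key claim I will need is: there exists $q\in Q$ such that $q^*\coloneqq 2c-q$ lies in $\MEB(C_{\mathrm{other}(q)})$, where $C_{\mathrm{other}(q)}$ denotes the cluster not containing $q$. Given such a $q$, I set $P'=P\cup\{q^*\}$ and extend the optimal clustering by placing $q^*$ in $C_{\mathrm{other}(q)}$; because $q^*$ is already in that $\MEB$ the cost is preserved, and since $q^*\in\partial D$ we still have $\MEB(P')=D$, now defined by the two-point antipodal set $\{q,q^*\}$. Applying Case~1 to $P'$ with this two-point defining set yields a hyperplane perpendicular to $d(q)$ separating the augmented clusters, which in particular separates the original $C_1,C_2$ as subsets; and $q$ lies in the original $Q$, as required.

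The main obstacle will be proving the key claim in Case~2. In the plane, the analogous step combines two ingredients: Lemma~\ref{lem:no_empty_half} forces $p_3^*$ into the short arc $\arc{p_1p_2}$ between the two $Q$-points grouped in $C_1$, and any minimum enclosing disk of a subset of $D$ containing $p_1,p_2$ contains this short arc. To lift this to $\mathbb{R}^d$, I would pigeonhole on $Q=Q_1\cup Q_2$ (both parts nonempty by Lemma~\ref{lem:no_empty_half} combined with the strict cost assumption) to fix a ``majority'' cluster $C_1$ with $|Q_1|\geq\lceil|Q|/2\rceil\geq 2$, and then look for a $q\in Q_2$ whose antipode lies in a spherical cap of $\partial D$ spanned by $Q_1$ that is contained in $\MEB(C_1)$. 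The higher-dimensional analog of ``$\MEB$ contains the short spherical cap'' is the technically delicate piece and will require analyzing the defining points of $\MEB(C_1)$ together with their positions on $\partial D$; for splits of $Q$ with no clean minority (for instance a $2{+}2$ split in $\mathbb{R}^3$) the selection of $q$ will require more than pigeonhole---I expect to need to choose $q$ based on geometric data such as extremality along the direction joining the centers of $\MEB(C_1)$ and $\MEB(C_2)$.
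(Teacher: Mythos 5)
Your overall architecture---find a $q\in Q$ whose antipode $q^*$ lies in the minimum enclosing ball of the cluster not containing $q$, then separate along the diameter $d(q)$ by the projection argument---is the same skeleton as the paper's proof, and both your Case~1 and your reduction of Case~2 to Case~1 are sound. The problem is that the proof stops exactly where the real work begins: the ``key claim'' of Case~2 is the entire content of the lemma for $d\geq 3$, and your proposal does not establish it. Moreover, the route you sketch for it is not the right one. Trying to exhibit the desired $q$ \emph{constructively}---by pigeonholing to a majority part $Q_1$ and arguing that $\MEB(C_1)$ contains a spherical cap ``spanned'' by $Q_1$ that must capture some antipode of $Q_2$---runs into exactly the obstacles you name (balanced splits such as $2{+}2$ in $\mathbb{R}^3$, and the absence of a clean higher-dimensional analog of ``a disk defined by points of $D$ containing $p_1,p_2$ contains the short arc between them''), and you offer no resolution beyond expecting to choose $q$ from further geometric data.

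The paper proves the claim \emph{by contradiction} rather than by construction, which sidesteps all of this. Let $S_i=\MEB(Q_i)\cap\partial\MEB(P)$, and let $S_1^*,Q_1^*$ be the reflections of $S_1,Q_1$ through the center $c$. If no $q\in Q_2$ has $q^*\in S_1$ and no $q\in Q_1$ has $q^*\in S_2$, then neither $Q_1^*$ nor $Q_2$ meets the lens $S_1^*\cap S_2$, so one can pass a hyperplane $F$ through $c$ (and through $S_1^*\cap S_2$ if it is nonempty) that strictly separates $Q_1^*$ from $Q_2$. Since $F$ contains $c$, reflecting $Q_1^*$ back places $Q_1$ on the same side of $F$ as $Q_2$, so all of $Q$ lies strictly in one open half-space bounded by a hyperplane through the center of $\MEB(P)$---contradicting Lemma~\ref{lem:no_empty_half} and the assumption $\MEB(Q)=\MEB(P)$. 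This contrapositive separation argument is the missing idea: it never needs to identify a majority cluster or control which caps contain which antipodes; it only observes that the failure of your key claim for \emph{every} $q\in Q$ would contradict $Q$ being a defining set. Without this (or an equivalent) argument, your Case~2 rests on an unproven assertion and the proof is incomplete.
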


Before proving this lemma, let us first consider the algorithm.
In order to extend Algorithm \ref{alg:k2} so that it works in $\mathbb{R}^d$, we need an algorithm to dynamically maintain the minimum enclosing ball of a set of points in $\mathbb{R}^d$ under insertions and deletions of points. This can be done in sublinear time by a result of Agarwal and Matousek \cite{dynamic91}. Namely, the amortized time per update is $\Oh\left(n^{1-1/(\lceil d/2 \rceil+1)+\varepsilon}\right)$. Since for every candidate direction we need to do at most $n$ updates (as in the planar case), we have the following result.

\begin{theorem}
For a set $P$ of $n$ points in  $\mathbb{R}^d$ an optimal 2-\Minsumradius clustering can be computed in $\Oh\left(n^{2-1/(\lceil d/2 \rceil+1)+\varepsilon}\right)$ time.
\end{theorem}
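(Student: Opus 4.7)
The plan is to lift Algorithm~\ref{alg:k2} from the plane to $\mathbb{R}^d$ in the most direct way possible. First I would compute, in $\Oh(n)$ time (for constant $d$), a subset $Q\subseteq P$ with $|Q|\le d+1$ that defines $\MEB(P)$, for instance using Welzl's algorithm or a deterministic variant. By Lemma~\ref{lem:k2_separator_R^d}, some optimal clustering $C_1,C_2$ is separated by a hyperplane perpendicular to $d(q)$ for at least one $q\in Q$, so it suffices to try, for every such $q$, all combinatorially distinct separators orthogonal to $d(q)$.

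For each of the at most $d+1$ candidate directions $d(q)$, I would sort the points of $P$ by their projection onto the oriented diameter $d(q)$ in $\Oh(n\log n)$ time, and then sweep a separating hyperplane through the sorted points. Starting with all points in the right set and none in the left, I move one point at a time from right to left and maintain the two minimum enclosing balls $A$ and $B$ using the dynamic MEB data structure of Agarwal and Matou\v sek~\cite{dynamic91}, which supports each insertion and deletion in amortized $\Oh(n^{1-1/(\lceil d/2\rceil+1)+\varepsilon})$ time for any constant $d$. After each update I compare $r(A)+r(B)$ against the best sum-of-radii seen so far and retain the minimum; correctness follows directly from Lemma~\ref{lem:k2_separator_R^d}, since every combinatorially distinct partition of $P$ by a hyperplane orthogonal to $d(q)$ is produced exactly once during the sweep.

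Summing over the at most $d+1$ directions, the algorithm performs $\Oh(n)$ updates on each of $A$ and $B$, each costing the amortized bound above, for a total of
\[
\Oh\bigl((d+1)\cdot n \cdot n^{1-1/(\lceil d/2\rceil+1)+\varepsilon}\bigr) = \Oh\bigl(n^{2-1/(\lceil d/2\rceil+1)+\varepsilon}\bigr),
\]
absorbing the constant $d+1$ and the $\Oh(n\log n)$ sorting cost into the $n^{\varepsilon}$ slack. The one step I expect to be the main obstacle, and indeed the whole reason the algorithm achieves sub-quadratic running time, is Lemma~\ref{lem:k2_separator_R^d}: without its guarantee that only $\Oh(d)$ candidate directions need to be checked, one would have to consider roughly $\Theta(n^{d-1})$ combinatorially distinct hyperplane directions and lose any hope of beating $\Omega(n^2)$. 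Granted the lemma, however, the theorem follows from essentially the same argument as in the planar case, with the dynamic MEB data structure of~\cite{dynamic91} substituting for the planar dynamic MED of~\cite{k2quadratic}.
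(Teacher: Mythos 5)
Your proposal is correct and follows essentially the same route as the paper: reduce to the $d+1$ candidate directions via Lemma~\ref{lem:k2_separator_R^d}, sweep a hyperplane over the $\Oh(n)$ combinatorially distinct separators per direction, and maintain both minimum enclosing balls with the dynamic data structure of~\cite{dynamic91} at amortized cost $\Oh(n^{1-1/(\lceil d/2\rceil+1)+\varepsilon})$ per update. The only difference is that you spell out the sorting and sweep details, which the paper leaves implicit.
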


\begin{proof}
    Let $Q \subset P$ be a set of at most $d+1$ points that define $D$.
    For each point $q \in Q$, there exist $\Oh(n)$ combinatorially different hyperplanes that are orthogonal to $d(q)$ and partition~$P$ into two clusters. For each of these clusters we have to update the minimum enclosing ball in time $T(n)=\Oh\left(n^{1-1/(\lceil d/2 \rceil+1)+\varepsilon}\right)$ to obtain the 2-\Minsumradius cost. Taking the minimum over all points and all hyperplanes, we find an optimal solution in time $(d+1) \cdot n \cdot T(n) = \Oh(n \cdot T(n))$.
\end{proof}

Let us now prove the structural lemma.

\begin{proof}[Proof of Lemma~\ref{lem:k2_separator_R^d}]
Analogous to the $d=2$ case, assume that the optimal 2-\Minsumradius solution consists of two non-empty clusters, as otherwise an optimal cluster is simply $\MEB(P)$.
If $\MEB(P)$ is not an optimal 2-\Minsumradius clustering, then the optimal cost is less than the radius of $\MEB(P)$ and $C_1, C_2$ are both non-empty.
Let $Q \subset P, |Q| \leq d+1$ be a set of points that define the minimum enclosing ball of $P$, i.e., $\MEB(Q) = \MEB(P)$, and let $Q$ lie on the boundary of the minimum enclosing ball of $P$, i.e., $Q \subset \partial \MEB(P)$.
We argue that a hyperplane orthogonal to a diameter defined by a point in $Q$ separates $C_1$ and $C_2$. 
These points can be computed in $\Oh(n)$ time~\cite{DBLP:journals/siamcomp/Megiddo83a,DBLP:conf/birthday/Welzl91}.


Therefore, consider a partition of $Q$ into $Q_1$ and $Q_2$ and assume without loss of generality that $Q_1 \subset C_1$ and $Q_2 \subset C_2$.
Let $S(P)$ be the minimum enclosing hypersphere of $P$, i.e., $S(P) = \partial \MEB(P)$.
Let $S_1 = \MEB(Q_1) \cap S(P)$ and $S_2 = \MEB(Q_2) \cap S(P)$, and let $c$ be the center of $S(P)$.
Let $S^*_1$ and $Q_1^*$ be $S_1$ and $Q_1$ mirrored at $c$ respectively, i.e., it is all the antipodal points of $S_1$ and $Q_1$ respectively.
We now consider two cases.



First, suppose that $S^*_1 \cap Q_2 \neq \emptyset$. Then, we can construct a diameter of $\MEB(P)$ that intersects both $S_1$ and $S_2$.
As there is a point $q \in Q_2$ with antipodal point $q^*$ such that $q \subset S_2$ and $q^* \subset S_1$, 
both extremes of the diameter $d(q)$ are contained within a different cluster. As the sum of the radii is less than $\MEB(P)$, there is a hyperplane orthogonal to this diameter $d(q)$ that separates the two clusters.
The case $S_2 \cap Q^*_1 \neq \emptyset$ is symmetric.


Hence, it remains to consider the case $(S^*_1 \cap Q_2) \cup (S_2 \cap Q^*_1) = \emptyset$.
We show that this leads to a contradiction.
Note that $S^*_1 \cap Q_2 = \emptyset$ and $S_2 \cap Q^*_1 = \emptyset$ implies that $(S^*_1 \cap S_2)$ contains no point of $Q_2 \cup Q^*_1$.
Hence, we can construct a hyperplane $F$ through $c$ (and $S^*_1 \cap S_2$ if it is non-empty) that strictly separates $Q^*_1$ and $Q_2$ (i.e., $F$ also does not contain any point of $Q^*_1 \cup Q_2$).
Note that as $F$ contains $c$ and $Q^*_1$ was mirrored at $c$, we have that $Q_1$ and $Q_2$ are on the same side of $F$ but do not intersect it.
Thus, all points of $Q = Q_1 \cup Q_2$ are strictly contained on some half of $\MEB(P)$.
It follows from Lemma~\ref{lem:no_empty_half} that $\MEB(Q) \neq \MEB(P)$, which is a contradiction.

\end{proof}

\section{Near-quadratic algorithm for 3-\Minsumradius}
\label{sec:k3}

In this section, we prove the following result:

\begin{theorem}\label{thm:k3}
For a set $P$ of $n$ points in the Euclidean plane an optimal 3-\Minsumradius can be computed in expected $\Oh(n^2 \log^2 n \log^2 \log n)$ or worst-case $\Oh(n^2 \log^4n)$ time. 
\end{theorem}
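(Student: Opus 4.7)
The plan is to mirror the strategy for $2$-\Minsumradius in Section~\ref{sec:k2} at one level deeper. First, I would prove a structural lemma that is the natural $3$-cluster analog of Lemma~\ref{lem:k2_separator}: if $p_1, p_2, p_3$ are three points of $P$ defining the minimum enclosing disk $D$, then there exists an optimal $3$-\Minsumradius clustering $(C_1, C_2, C_3)$ and an index $i \in \{1,2,3\}$ such that a line orthogonal to $d(p_i)$ separates one of the three clusters from the union of the other two. The trivial case in which the optimal cost equals $r(D)$ is again handled by $(P,\emptyset,\emptyset)$, which satisfies the conclusion vacuously.

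Assuming this lemma, the algorithm is a direct analog of Algorithm~\ref{alg:k2}. For each of the three diameters $d(p_1), d(p_2), d(p_3)$, I sort the points of $P$ by projection onto $d(p_i)$ and sweep a perpendicular separator. At each of the $\Oh(n)$ combinatorially distinct positions, the sweep induces a partition $P=A\cup B$, and I evaluate two candidate $3$-clusterings: the one where $A$ is a single cluster and $B$ is optimally $2$-clustered, and the symmetric one. Each evaluation consists of reading $r(A)$ off a dynamically maintained $\MEB$ (as in Algorithm~\ref{alg:k2}) together with one invocation of the $2$-\Minsumradius algorithm of Theorem~\ref{thm:k2} on the complementary side. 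The best clustering encountered across all three directions is returned; correctness is immediate from the structural lemma.

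For the running time, there are $\Oh(n)$ separator positions in total over the three directions, each triggering one invocation of $2$-\Minsumradius on a subset of $P$. By Theorem~\ref{thm:k2} this costs expected $\Oh(n\log^2 n\log^2\log n)$ or deterministic $\Oh(n\log^4 n)$, and the incremental $\MEB$ updates maintaining $r(A)$ and $r(B)$ fit within the same per-update budget. Summing over the $\Oh(n)$ separators yields expected $\Oh(n^2\log^2 n\log^2\log n)$ and deterministic $\Oh(n^2\log^4 n)$ total time, matching the claim.

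The main obstacle is the structural lemma. By the same pigeonhole reasoning as in the proof of Lemma~\ref{lem:k2_separator}, no single cluster may contain all of $p_1, p_2, p_3$ in a nontrivial optimum, so the distribution of the defining points among the three clusters is either $(2,1,0)$ or $(1,1,1)$. In the $(2,1,0)$ case with $p_1,p_2\in C_1$ and $p_3\in C_2$, I would adapt the antipodal-point trick from the proof of Lemma~\ref{lem:k2_separator}: $p_3^*\in\arc{p_1p_2}$ by Lemma~\ref{lem:no_empty_half}, and combining this with the pairwise disjointness of optimal MEDs (Lemma~\ref{non-intersecting}) should yield a line perpendicular to $d(p_3)$ that isolates $C_2$ from $C_1\cup C_3$. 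The delicate subcase is the $(1,1,1)$ distribution, where each cluster touches $\partial D$ in exactly one defining point; here I would exploit the strict inequality $r(C_1)+r(C_2)+r(C_3)<r(D)$ together with Lemma~\ref{lem:no_empty_half} to argue that at least one $d(p_i)$ admits a perpendicular line isolating $C_i$, since otherwise the three MEDs would be forced into a configuration whose combined projections cover all three diameters. Pinning this last step down rigorously is where I expect the geometric case analysis to be concentrated.
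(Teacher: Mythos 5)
Your overall plan is the paper's: the same structural lemma (Lemma~\ref{lem:k3_separator}), the same sweep over $\Oh(n)$ separators per direction with one call to the $2$-\Minsumradius algorithm per position (Algorithm~\ref{alg:k3}), and the same running-time accounting. Your handling of the trivial case and of the $(2,1,0)$ distribution via the antipodal point $p_3^*$ and Lemma~\ref{lem:no_empty_half} also matches the paper's Cases~1 and~2. (Trying both sides of each separator as the singleton cluster is harmless redundancy; the lemma lets the paper fix the side containing $p_i$.)

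The genuine gap is exactly where you flag it: the $(1,1,1)$ case, which is the bulk of the paper's proof. Your proposed mechanism --- that if no $d(p_i)$ admits a separating perpendicular line, then ``the combined projections cover all three diameters'' --- is not the right invariant and does not follow. The paper instead analyzes the \emph{blocking} structure: saying $\bar C_j$ blocks $\bar C_i$ if $\bar C_j$ meets the half-plane cut off by the candidate tangent $\ell_i$ on $\bar C_i$'s side, it distinguishes (3a) one disk blocks both others, from (3b) cyclic blocking. In case 3a a covered diameter does eventually appear, but only after shrinking two of the disks to be tangent to $D$ at their defining points and running a monotonicity argument in the angle of $p_3$ showing the minimum of $r(C_3)$ occurs at an endpoint of its range, where a diameter's projection is covered. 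In case 3b no single diameter need be covered by the three projections; the paper instead proves $2r(C_i)+2r(C_{i+1})\geq b_{i+1}$ for the strip widths $b_{i+1}$ between the tangent at $p_{i+1}$ and the parallel line through $p_i$, sums cyclically to get $4\bigl(r(C_1)+r(C_2)+r(C_3)\bigr)\geq b_1+b_2+b_3$, and shows $b_1+b_2+b_3\geq 4$ by explicit trigonometric minimization. The factor $4$ (each radius counted twice) is essential and the bound is tight, so a cruder single-diameter projection argument cannot close this case. Without it the structural lemma, and hence the theorem, remains unproven.
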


We first give the algorithm in Section~\ref{sub:algo_k3}, and then prove the structural result that this algorithm relies on in Section~\ref{sec:k3_cuts}.

\subsection{Algorithm}\label{sub:algo_k3}

Our near-quadratic algorithm for the 3-\Minsumradius problem again relies on the structural insight that there are only linearly many cuts that need to be considered in order to find one that separates one cluster from the two other clusters.
For the separated cluster we can then simply compute the minimum enclosing disk, while for the other two clusters we use our near-linear time algorithm for 2-\Minsumradius.
Hence, we obtain a near-quadratic time algorithm.

The following is the main structural lemma that our algorithm relies on:
\begin{lemma} \label{lem:k3_separator}
Consider a point set $P$ in the Euclidean plane and let $p_1, p_2, p_3$ be three points in $P$ that define the minimum enclosing disk of $P$ (with potentially $p_2 = p_3$).
There exists an optimal 3-\Minsumradius solution $C_1, C_2, C_3$  to $P$ such that for a point $q\in\{p_1,p_2,p_3\}$ and a line $\ell$ orthogonal to $d(q)$, the line $\ell$ separates the cluster containing $q$ from the other two clusters.
\end{lemma}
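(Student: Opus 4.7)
Plan: I will mirror the proof of Lemma~\ref{lem:k2_separator}, case-splitting on how the defining points $p_1, p_2, p_3$ distribute among the three clusters (allowing $p_3 = p_2$ when $D$ is defined by only two points). As in the two-cluster setting, I may assume the optimal cost is strictly less than $r(D)$, since otherwise the clustering $(P,\emptyset,\emptyset)$ is already optimal and trivially satisfies the lemma; in particular, no cluster contains two antipodal points of $\partial D$.

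Case~1: $D$ is defined by two antipodal points $p_1, p_2$. Then $p_1 \in C_1$, $p_2 \in C_2$, and $C_3$ contains no defining point. I consider the outward tangents $\ell_1, \ell_2$ to $\med(C_1), \med(C_2)$ perpendicular to the common diameter $d(p_1) = d(p_2)$. If $\ell_1$ separates $C_1$ from $C_2 \cup C_3$ I take $q = p_1$, and similarly if $\ell_2$ separates $C_2$ from $C_1 \cup C_3$ I take $q = p_2$. Otherwise, projecting onto $d(p_1)$ and using the anchors $p_1 \in C_1$ and $p_2 \in C_2$ at the two endpoints of the projected diameter together with the two failure conditions, an inclusion--exclusion argument shows that the three cluster projection intervals collectively cover $[0, 2r(D)]$ with total length strictly exceeding $2r(D)$, which forces $r(C_1) + r(C_2) + r(C_3) > r(D)$ --- a contradiction.

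Case~2: $D$ is defined by three distinct points $p_1, p_2, p_3$. If all three lie in one cluster the lemma is trivial. If two of them, say $p_1, p_2$, lie in $C_1$ (WLOG) and $p_3 \in C_2$, I apply the antipodal trick from the proof of Lemma~\ref{lem:k2_separator}: $p_3^* \in \arc{p_1 p_2}$ by Lemma~\ref{lem:no_empty_half}, and since $r(C_1) < r(D)$ the minor arc lies inside $\med(C_1)$, so adding $p_3^*$ to $C_1$ preserves every MED. Then $p_3, p_3^*$ form an antipodal pair in different clusters, and applying the Case~1 argument to this augmented instance yields a separator perpendicular to $d(p_3)$ between $C_1$ or $C_2$ and the other two clusters; if $C_2$ is separated we take $q = p_3$, while in the opposite sub-case I argue, using that $p_1, p_2 \in \partial D \cap C_1$ compresses $\med(C_1)$ toward the $p_3^*$-side of $d(p_3)$, that one of the tangents to $\med(C_1)$ perpendicular to $d(p_1)$ or $d(p_2)$ also separates $C_1$ from $C_2 \cup C_3$. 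In the final sub-case, when $p_1, p_2, p_3$ lie in three different clusters $C_i \ni p_i$, I consider the three outward tangents $\ell_i$ to $\med(C_i)$ perpendicular to $d(p_i)$: either some $\ell_i$ separates $C_i$ from $C_j \cup C_k$ (so $q = p_i$ works), or I combine the three failures using the symmetric relations $p_i^* \in \arc{p_j p_k}$ (from Lemma~\ref{lem:no_empty_half}) together with simultaneous projection-length bounds in the three directions to obtain $\sum_i r(C_i) \geq r(D)$, contradicting optimality.

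The main obstacle is the ``misaligned'' sub-case of Case~2 in which the antipodal trick yields a separator perpendicular to $d(p_3)$ but for the pair's cluster $C_1$ rather than the singleton's cluster $C_2$, forcing a conversion into a separator perpendicular to $d(p_1)$ or $d(p_2)$; and the fully split sub-case of Case~2, where projection arguments in three distinct directions must be merged via the symmetric antipodal structure enforced by $p_1, p_2, p_3$ defining $D$. Both obstacles hinge on turning the three symmetric relations $p_i^* \in \arc{p_j p_k}$ into quantitative length bounds on the projections of $\med(C_1), \med(C_2), \med(C_3)$.
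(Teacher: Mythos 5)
Your case structure and your treatment of the two easier configurations (two antipodal defining points; two defining points sharing a cluster plus the antipodal trick with $p_3^*$) match the paper's Cases 1 and 2. You even flag a real subtlety that the paper's own reduction passes over silently: after inserting $p_3^*$, the Case-1 argument may return a line orthogonal to $d(p_3)$ that splits off the cluster containing $p_1,p_2$ rather than the one containing $p_3$, which does not literally match the lemma's conclusion (nor the algorithm, which places $p_i$ on the singleton side). However, your proposed repair --- converting this into a separator orthogonal to $d(p_1)$ or $d(p_2)$ via a ``compression'' of $\med(C_1)$ --- is only asserted, not argued, so it does not yet close that sub-case. The more serious problem is your final sub-case, in which $p_1,p_2,p_3$ lie in three different clusters. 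This is the technical heart of the paper's proof (its Case 3), and your one-sentence plan of ``combining the three failures with simultaneous projection-length bounds in the three directions'' is not a proof. The failure of the tangent $\ell_i$ to separate only says that \emph{some} other disk enters the slab beyond $\ell_i$, and which disk does so matters: the paper must split into (3a) one disk blocks both others, and (3b) the disks block one another cyclically, and these regimes require entirely different arguments.

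In regime (3b) a version of your plan does work: blocking of $\bar C_{i+1}$ by $\bar C_i$ gives $2r(C_i)+2r(C_{i+1})\ge b_{i+1}$, where $b_{i+1}$ is the width of the slab between the tangent to $D$ at $p_{i+1}$ and the parallel line through $p_i$ (equal to $1-\cos(\angle p_i c p_{i+1})$ for the unit disk); summing around the cycle yields $4\sum_i r(C_i)\ge b_1+b_2+b_3$, and a trigonometric computation shows that the minimum of $b_1+b_2+b_3$ over the placement of $p_3$ is \emph{exactly} $4$, so there is no slack whatsoever for a lossier projection bound. In regime (3a), with $\bar C_2$ blocking both $\bar C_1$ and $\bar C_3$, the same summation gives $2r(C_1)+4r(C_2)+2r(C_3)\ge b'+b''$, which has an extra $r(C_2)$ on the left and cannot be closed by any choice of directions; the paper instead uses an extremal argument, shrinking $\bar C_1$ and $\bar C_3$ to disks internally tangent to $D$ at $p_1,p_3$ and sliding $p_3$ along its admissible arc to show that $r(C_3)$ is minimized only when some pair of the $p_i$ becomes diametral, at which point a single projection gives $\sum_i r(C_i)\ge r(D)$. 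Neither the case distinction, nor the tightness of the cyclic bound, nor the extremal argument appears in your sketch, so the fully split sub-case is a genuine gap rather than a routine verification.
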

We prove this lemma in Section~\ref{sec:k3_cuts}.
Using this lemma, we can now give our full algorithm, see Algorithm~\ref{alg:k3}. 

\subparagraph{Algorithm description.} For each point $p_i$, $i=1,2,3$, that defines $D$, we consider all combinatorially distinct separators that are orthogonal to the diameter $d(p_i)$, as shown in \cref{fig:2pointsdefine3clusters} for the case with only one distinct possible diameter. For each such separator, let $A\subset P$ be the set induced by the separator that contains $p_i$, and let $B = P \setminus A$. We compute the minimum enclosing disk of $A$ and the 2-\Minsumradius of $B$ using the algorithm of Theorem~\ref{thm:k2}. Finally, we compare this solution to the best solution found so far, and update it when necessary. Lemma~\ref{lem:k3_separator} implies that the algorithm will find an optimal clustering this way. As there are only linearly many combinatorially distinct separators orthogonal to a fixed line, and a solution for $k=2$ can be computed in expected $\Oh(n\log^2 n \log^2 \log n)$ time, the expected running time of the algorithm is $\Oh(n^2 \log^2 n \log^2 \log n)$. The worst-case running time follows by using the deterministic algorithm to maintain the minimum enclosing disk~\cite{k2quadratic}.

\begin{algorithm}[t]

\begin{algorithmic}[1]
\Procedure{3-\Minsumradius}{$P$}
\State $D \gets $ minimum enclosing disk of $P$
\State $p_1, p_2, p_3$ $\gets$ points defining $D$
\State $S_{\text{OPT}} \gets $ 2-\Minsumradius{($P$)}
\For{each diameter $d(p_i)$ of $p_i$}
    \For{each separator $s$ orthogonal to $d(p_i)$ through a $p \in P$}
		\State $A,B \gets$ sets induced by separator $s$, such that $p_i \in A$
		\State $S \gets (\Call{MED}{A}, \Call{2-\Minsumradius}{B})$
		\If{$S$ is better than $S_{\text{OPT}}$}
            \State $S_{\text{OPT}} \gets S$
		\EndIf
	\EndFor
\EndFor
\State \Return $S_{\text{OPT}}$
\EndProcedure
\end{algorithmic}

\caption{Near-quadratic time algorithm for 3-\Minsumradius.}
\label{alg:k3}
\end{algorithm}

\subsection{Linear number of cuts} \label{sec:k3_cuts}

In this subsection, we prove the correctness of our algorithm. That is, we prove Lemma~\ref{lem:k3_separator} and thus show that it suffices to check a linear number of candidate separators, one of which separates one cluster from the two others. 

\begin{proof}[Proof of Lemma~\ref{lem:k3_separator}]

\begin{figure}
     \centering
     \includegraphics[page=3]{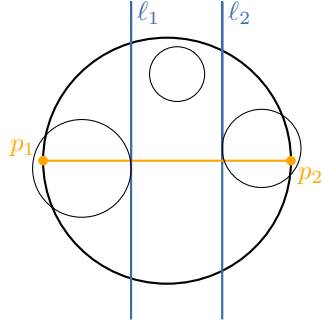}
     \caption{Visualization of the proof for the existence of a separator that is perpendicular to the diameter of $p_1$ if the minimum enclosing disk is defined by two points.}
     \label{fig:2pointsdefine3clusters}
 \end{figure}

 In this proof, we define $\bar C_i=\med(C_i)$. We again assume that $r(C_1)+r(C_2)+r(C_3) < r(D)$, as otherwise the solution with a single cluster trivially satisfies our claim.
 We consider the following cases:

 \subparagraph{Case 1:} \textit{Two points define $D$.}
 Let $p_1$ and $p_2$ be the two points that define $D$. Then $p_1$ and $p_2$ are in different clusters, say $p_1 \in C_1$ and $p_2 \in C_2$.
 Let $\ell_i$ be the line perpendicular to $d(p_1)$ and tangent to $\bar C_i$, furthest in direction $\overrightarrow{c-p_i}$, for $i\in\{1,2\}$.
We claim that the lines have the following properties:
 (i) $\ell_1\cap\bar C_2=\emptyset$, (ii) $\ell_2\cap\bar C_1=\emptyset$, and (iii) $\ell_1\cap\bar C_3=\emptyset$ or $\ell_2\cap\bar C_3=\emptyset$.
 If case (i) or (ii) is false, the projection of the union $\bar C_1\cup \bar C_2$ on the diameter $d(p_1)$ covers the entire diameter, so the sum of radii is at least $r(D)$.
 If case (iii) is false, the same holds for the union $\bar C_1\cup \bar C_2\cup \bar C_3$ of all the disks.
  If $\ell_1\cap\bar C_3=\emptyset$ then, depending on which side of $\ell_1$ contains $C_3$, $\ell_1$ either separates $C_2$ from $C_1$ and $C_3$ or $C_1$ from $C_2$ and $C_3$. Otherwise, if $\ell_2\cap\bar C_3=\emptyset$, then $\ell_2$ separates $C_2$ from $C_1$ and $C_3$ or $C_1$ from $C_2$ and $C_3$.

 \subparagraph{Case 2: } \textit{Three points define $D$, two of which are in the same cluster.}
Suppose $D$ is defined by three points $p_1, p_2, p_3$, where $p_1, p_2 \in C_1$.
 Then we add an artificial point $p_3^*$ diametrically opposing $p_3$ on $D$.
 This reduces this case to Case 1.

 \subparagraph{Case 3: } \textit{Three points define $D$, all of which are in different clusters.}
 \begin{figure}
     \centering
     \includegraphics{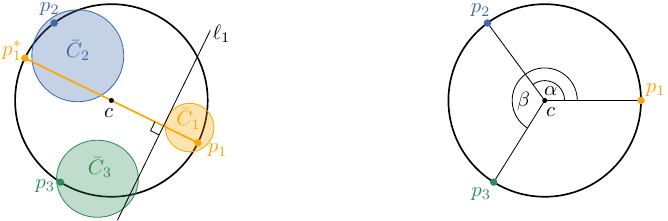}
     \caption{Illustration for the notation used in Case 3.
     Note that to the left, $\bar C_1$ is blocked by $\bar C_3$.}
     \label{fig:notation_k3}
 \end{figure}
 Now we consider the final case, where $p_1\in C_1,p_2\in C_2, p_3\in C_3$. This is also the most technically involved case.
 For simplicity, we assume that $D$ is a unit disk.
 Define lines $\ell_1,\ell_2,\ell_3$ as follows: for $i\in\{1,2,3\}$, $\ell_i$ is the line perpendicular to $d(p_i)$ that is tangent to $\bar C_i$ furthest in the direction $\overrightarrow{c-p_i}$; see Figure~\ref{fig:notation_k3} for an illustration of our notation.
 If a line $\ell_i$ separates $C_i$ from the other clusters, we are done.
 We say that a disk $\bar C_j$ \emph{blocks} a disk $\bar C_i$ if $\bar C_j\cap L_i\neq\emptyset$, where $L_i$ is the half-plane bounded by $\ell_i$ that contains $\bar C_i$. 
 If $\ell_i$ is not a separator, it has to be the case that $L_i$ intersects one of the other two disks.
 We show that if each disk $\bar C_i$ is blocked by another disk, then $r(C_1)+r(C_2)+r(C_3)\geq 1$, which is a contradiction.

Without loss of generality, assume that the center of $D$ is $c=(0,0)$, that $p_1=(1,0)$ and that the points appear in the order $p_1,p_2,p_3$ around $\partial D$ in counterclockwise direction.
Let $\alpha$ be the angle of $p_2$ and $\beta$ be the angle of $p_3$.
Note that by our assumptions, we have $\alpha\in(0,\pi]$ and $\beta\in[\pi,\pi+\alpha]$.
There are two subcases: (3a) there is a disk that blocks both of the other disks, and (3b) the disks block each other in a cyclic pattern.

\subparagraph{Case 3a:}
Assume without loss of generality that $\bar C_2$ blocks $\bar C_1$ and $\bar C_3$.
The reductions described in the following are shown in Figure~\ref{fig:2}.
For $i\in\{1,3\}$, consider the disk $\bar C_i'$ that is tangent to $D$ at $p_i$ and has the tangent $\ell_i$.
If not $\bar C_i=\bar C_i'$, then we can replace $\bar C_i$ with $\bar C_i'$, and this will decrease $r(C_1)+r(C_2)+r(C_3)$ while maintaining that $\bar C_2$ blocks both of the other disks.
After the replacement, the disks no longer contain all input points $P$, but we just do it to obtain a lower bound on the sum of radii of the original disks.
Hence, we assume that $\bar C_i$ is tangent to $D$ at $p_i$.
Furthermore, allowing the tangent $\ell_i$ to change as well, we can assume that $\bar C_i$ is the smallest disk tangent to $D$ at $p_i$ such that $\bar C_2$ blocks $\bar C_i$, which means that $\ell_i$ is a separating common tangent of $\bar C_i$ and $\bar C_2$.

\begin{figure}
\centering
\includegraphics[page=3]{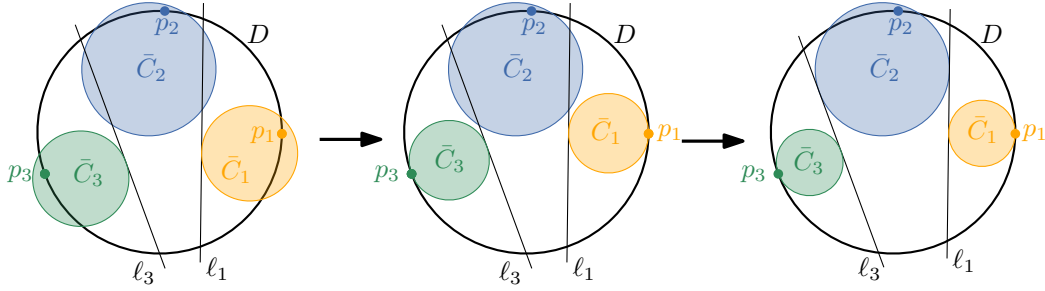}
\caption{
The reduction used in case 3a to a situation where $\bar C_1$ and $\bar C_3$ are as small as possible given the points $p_1,p_3$.
}
\label{fig:2}
\end{figure}

\begin{figure}
\centering
\includegraphics[page=4]{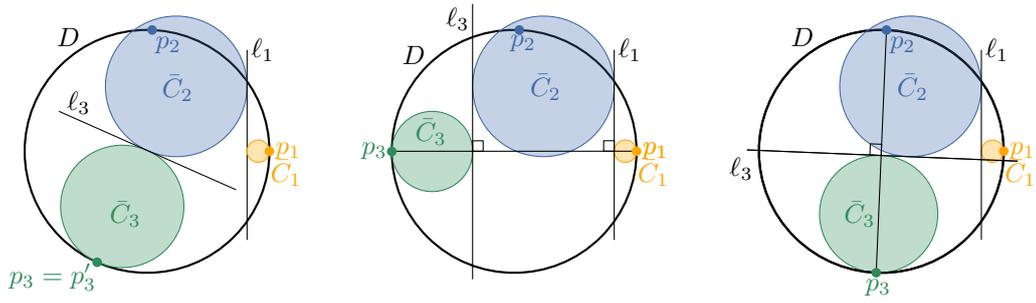}
\caption{Situations in case 3a.
Left: Placement of $p_3$ where $r(C_3)$ is maximum.
Middle: Minimum when $p_1$ and $p_3$ are diametral.
Right: Minimum when $p_2$ and $p_3$ are diametral.}
\label{fig:4}
\end{figure}

We now fix $\bar C_1$ and $\bar C_2$ and consider the angle $\beta\in[\pi,\pi+\alpha]$ of $p_3$ that causes $r(C_3)$ to be as small as possible.
For a value $\beta$, let $\rho(\beta)=r(C_3)$ be the diameter of the resulting disk $\bar C_3$.
Let $p'_3$ be point on the boundary of $D$ on the same diameter as the center of $\bar C_2$ but on the opposite site of the center of $D$.
The radius $r(C_3)$ is maximum when $p_3=p'_3$, in which case the two disks $\bar C_2$ and $\bar C_3$ touch; see Figure~\ref{fig:4} (left).
As $p_3$ moves away from $p'_3$ in either direction, $\bar C_3$ shrinks monotonically.
It follows that $\rho$ has no local minima except at one or both of the endpoints $\beta\in\{\pi,\pi+\alpha\}$.
When $\beta=\pi$, the segment $p_1p_3$ is a diameter of $D$ and the projection of the disks $\bar C_1,\bar C_2,\bar C_3$ on $p_1p_3$ covers the entire diameter, so $r(C_1)+r(C_2)+r(C_3)\geq 1$; see Figure~\ref{fig:4} (middle).
When $\beta=\pi+\alpha$, the segment $p_2p_3$ is a diameter of $D$ and the projection of the disks $\bar C_2,\bar C_3$ on $p_2p_3$ covers the entire diameter, so $r(C_1)+r(C_2)+r(C_3)\geq r(C_2)+r(C_3)\geq 1$; see Figure~\ref{fig:4} (right).

\subparagraph{Case 3b:}
Without loss of generality, assume that $\bar C_i$ blocks $\bar C_{i+1}$ for each $i\in\{1,2,3\}$ (with indices taken modulo $3$).
Let $h_i$ be the tangent to $D$ in $p_i$ and $k_i$ be the line through $p_{i-1}$ parallel to $h_i$; see Figure~\ref{fig:3}.
Then the three lines $h_i$, $\ell_i$ and $k_i$ are all parallel.

\begin{figure}[h]
\centering
\includegraphics[page=2]{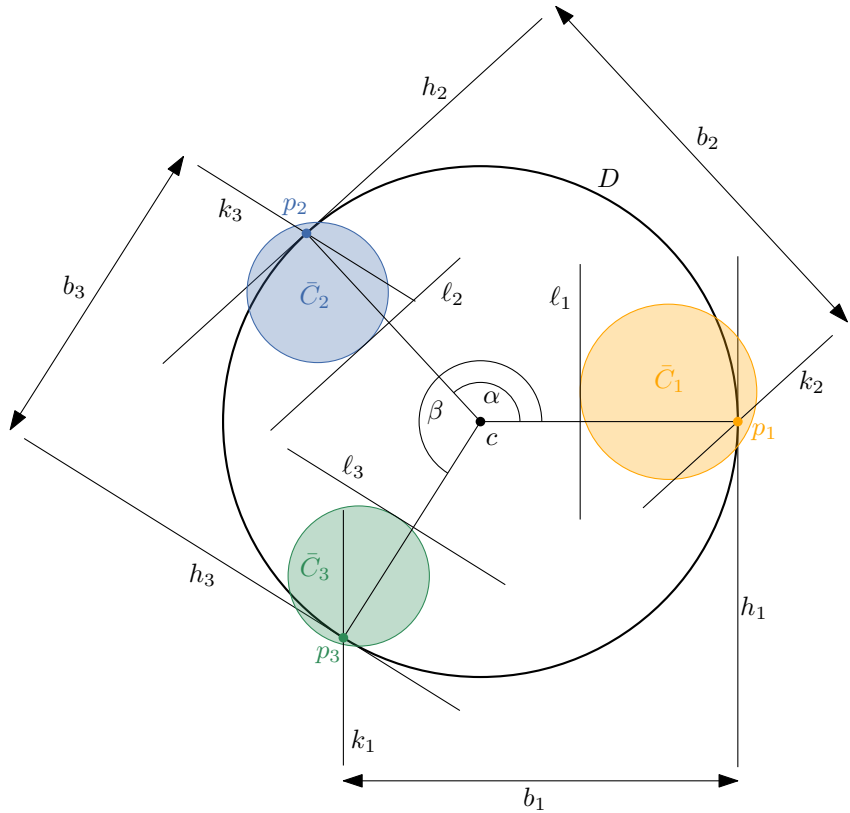}
\caption{The setting in case 3b.}
\label{fig:3}
\end{figure}

Since $\bar C_i$ blocks $\bar C_{i+1}$, the sum $2r(C_i)+2r(C_{i+1})$ is at least the distance between the lines $h_{i+1}$ and $k_{i+1}$.
Define $b_{i+1}$ to be this distance.
We then have $4(r(C_1)+r(C_2)+r(C_3))\geq b_1+b_2+b_3$, and we aim at showing $b_1+b_2+b_3\geq 4$.

We first fix $p_2$ and analyze the placement of $p_3$ that results in the smallest value of $b_3+b_1$.
In other words, we fix $\alpha$ and consider the value of $\beta\in[\pi,\pi+\alpha]$ as a free variable.
We observe that $b_3=1-\cos(\beta-\alpha)$ and $b_1=1-\cos(2\pi-\beta)=1-\cos\beta$.
Hence,
\[
b_3+b_1=2-\cos(\beta-\alpha)-\cos\beta=
2-\cos((\beta-\alpha/2)-\alpha/2)-\cos((\beta-\alpha/2)+\alpha/2).
\]
The sum formula $\cos(x+y)=\cos x\cos y-\sin x\sin y$ gives
\[
b_3+b_1=2-2\cos(\alpha/2)\cos(\beta-\alpha/2).
\]
Now it is clear that $b_3+b_1$ is maximum when $\beta=\pi+\alpha/2$ and minimum when $\beta\in\{\pi,\pi+\alpha\}$.
For both of these minimum values, we get
\[
b_3+b_1=2+2\cos^2(\alpha/2).
\]
Note that $b_2=1-\cos\alpha$, so we get
\[
b_1+b_2+b_3=1-\cos\alpha+2+2\cos^2(\alpha/2)=3-\cos\alpha+2\cos^2(\alpha/2).
\]
By the identity $\cos(2x)=2\cos^2 x-1$, we get
$b_1+b_2+b_3=4$, and we have the desired contradiction.
\end{proof}

\section{Conclusion}\label{sec:conclusion}
In this work, we presented two new clustering algorithms for point sets in the plane: one for $2$-\Minsumradius and one for $3$-\Minsumradius.
We improve the running time for the former problem from near-quadratic to near-linear, and generalize the algorithm to a sub-quadratic algorithm for any constant dimension. For the latter problem we give the first non-naive algorithm, matching the running time of the previously best algorithm for $2$-\Minsumradius.
At the heart of our algorithms are new structural insights on separators of optimal solutions, which might be of independent interest.
We showed that there are only linearly many separators that need to be considered for $k=2$ as well as $k=3$.
Hence, it is conceivable that there also exists a near-linear time algorithm for $3$-\Minsumradius.

An interesting direction for future work is to understand whether any of our structural insights can be pushed further to obtain a significantly improved running time for general $k$-\Minsumradius clustering.
Another natural future line of work is to extend our algorithm for $3$-\Minsumradius to higher dimensions, which could be another step towards an efficient algorithm for $k$-\Minsumradius in higher dimensions.
To understand the inherent hardness of the problems considered in this work, it would be of interest to show fine-grained lower bounds for some combinations of the ambient dimension and $k$, similar to~\cite{DBLP:conf/icalp/ChanHY23}.

We may also consider the \emph{rectilinear} $k$-\Minsumradius problem, where we aim to cover a set of points using $k$ axis-parallel squares of minimum total side length.
For $k=2$, this problem has a straight-forward $\Oh(n\log n)$ time algorithm, which is optimal in the algebraic computation tree model by a reduction from the Maximum Gap problem~\cite{DBLP:journals/algorithmica/LeeW86}.
For $k=3$, it is easy to see that there must be a vertical or horizontal line that separates one square from the other two in an optimal solution, so we obtain an $\Oh(n^2\log n)$ time algorithm.
It is interesting whether a faster algorithm exists.
Again, the analogous $k$-\Center problem of covering a set of points with $k$ squares of minimum maximum size has received much more attention, and~$\Oh(n)$ time algorithms have been described for up to three clusters~\cite{drezner1987rectangular,DBLP:journals/comgeo/Hoffmann05,DBLP:conf/compgeom/SharirW96} and $\Oh(n\log n)$ time algorithms for up to five clusters~\cite{DBLP:conf/issac/Nussbaum97,DBLP:conf/esa/Segal97,DBLP:conf/compgeom/SharirW96}.

\bibliography{main}

\begin{thebibliography}{10}

\bibitem{fast_fencing}
Mikkel Abrahamsen, Anna Adamaszek, Karl Bringmann, Vincent Cohen{-}Addad,
  Mehran Mehr, Eva Rotenberg, Alan Roytman, and Mikkel Thorup.
\newblock Fast fencing.
\newblock In {\em Proceedings of the 50th Annual {ACM} {SIGACT} Symposium on
  Theory of Computing, {STOC}}, pages 564--573. {ACM}, 2018.
\newblock \href {https://doi.org/10.1145/3188745.3188878}
  {\path{doi:10.1145/3188745.3188878}}.

\bibitem{DBLP:journals/dcg/AbrahamsenBBMM20}
Mikkel Abrahamsen, Mark de~Berg, Kevin Buchin, Mehran Mehr, and Ali~D. Mehrabi.
\newblock Minimum perimeter-sum partitions in the plane.
\newblock {\em Discret. Comput. Geom.}, 63(2):483--505, 2020.
\newblock \href {https://doi.org/10.1007/s00454-019-00059-0}
  {\path{doi:10.1007/s00454-019-00059-0}}.

\bibitem{dynamic91}
Pankaj~K. Agarwal and Jir{\'{\i}} Matousek.
\newblock Dynamic half-space range reporting and its applications.
\newblock {\em Algorithmica}, 13(4):325--345, 1995.
\newblock \href {https://doi.org/10.1007/BF01293483}
  {\path{doi:10.1007/BF01293483}}.

\bibitem{2center_quadratic}
Pankaj~K. Agarwal and Micha Sharir.
\newblock Planar geometric location problems.
\newblock {\em Algorithmica}, 11(2):185--195, 1994.
\newblock \href {https://doi.org/10.1007/BF01182774}
  {\path{doi:10.1007/BF01182774}}.

\bibitem{2means_nphard1}
Daniel Aloise, Amit Deshpande, Pierre Hansen, and Preyas Popat.
\newblock {NP}-hardness of {Euclidean} sum-of-squares clustering.
\newblock {\em Mach. Learn.}, 75(2):245--248, 2009.
\newblock \href {https://doi.org/10.1007/s10994-009-5103-0}
  {\path{doi:10.1007/s10994-009-5103-0}}.

\bibitem{alpha}
Helmut Alt, Esther~M. Arkin, Herv{\'{e}} Br{\"{o}}nnimann, Jeff Erickson,
  S{\'{a}}ndor~P. Fekete, Christian Knauer, Jonathan Lenchner, Joseph S.~B.
  Mitchell, and Kim Whittlesey.
\newblock Minimum-cost coverage of point sets by disks.
\newblock In {\em Proceedings of the 22nd {ACM} Symposium on Computational
  Geometry, {SoCG}}, pages 449--458. {ACM}, 2006.
\newblock \href {https://doi.org/10.1145/1137856.1137922}
  {\path{doi:10.1145/1137856.1137922}}.

\bibitem{DBLP:conf/stoc/BadoiuHI02}
Mihai Badoiu, Sariel Har{-}Peled, and Piotr Indyk.
\newblock Approximate clustering via core-sets.
\newblock In {\em Proceedings of the 34th Annual {ACM} Symposium on Theory of
  Computing, {STOC}}, pages 250--257. {ACM}, 2002.
\newblock \href {https://doi.org/10.1145/509907.509947}
  {\path{doi:10.1145/509907.509947}}.

\bibitem{DBLP:journals/corr/abs-2311-06111}
Moritz Buchem, Katja Ettmayr, Hugo Kooki~Kasuya Rosado, and Andreas Wiese.
\newblock A (3+{\(\epsilon\)})-approximation algorithm for the minimum sum of
  radii problem with outliers and extensions for generalized lower bounds.
\newblock {\em CoRR}, abs/2311.06111, 2023.
\newblock \href {http://arxiv.org/abs/2311.06111} {\path{arXiv:2311.06111}},
  \href {https://doi.org/10.48550/ARXIV.2311.06111}
  {\path{doi:10.48550/ARXIV.2311.06111}}.

\bibitem{geometric_clusterings}
Vasilis Capoyleas, G{\"{u}}nter Rote, and Gerhard~J. Woeginger.
\newblock Geometric clusterings.
\newblock {\em J. Algorithms}, 12(2):341--356, 1991.
\newblock \href {https://doi.org/10.1016/0196-6774(91)90007-L}
  {\path{doi:10.1016/0196-6774(91)90007-L}}.

\bibitem{2center_linear3}
Timothy~M. Chan.
\newblock More planar two-center algorithms.
\newblock {\em Comput. Geom.}, 13(3):189--198, 1999.
\newblock \href {https://doi.org/10.1016/S0925-7721(99)00019-X}
  {\path{doi:10.1016/S0925-7721(99)00019-X}}.

\bibitem{DBLP:conf/icalp/ChanHY23}
Timothy~M. Chan, Qizheng He, and Yuancheng Yu.
\newblock On the fine-grained complexity of small-size geometric set cover and
  discrete $k$-center for small $k$.
\newblock In {\em 50th International Colloquium on Automata, Languages, and
  Programming, {ICALP}}, volume 261 of {\em LIPIcs}, pages 34:1--34:19. Schloss
  Dagstuhl - Leibniz-Zentrum f{\"{u}}r Informatik, 2023.
\newblock \href {https://doi.org/10.4230/LIPIcs.ICALP.2023.34}
  {\path{doi:10.4230/LIPIcs.ICALP.2023.34}}.

\bibitem{CharikarP01}
Moses Charikar and Rina Panigrahy.
\newblock Clustering to minimize the sum of cluster diameters.
\newblock In Jeffrey~Scott Vitter, Paul~G. Spirakis, and Mihalis Yannakakis,
  editors, {\em Proceedings on 33rd Annual {ACM} Symposium on Theory of
  Computing, {STOC}}, pages 1--10. {ACM}, 2001.
\newblock \href {https://doi.org/10.1145/380752.380753}
  {\path{doi:10.1145/380752.380753}}.

\bibitem{2center_linear5}
Kyungjin Cho and Eunjin Oh.
\newblock Optimal algorithm for the planar two-center problem.
\newblock {\em CoRR}, abs/2007.08784, 2020.
\newblock \href {http://arxiv.org/abs/2007.08784} {\path{arXiv:2007.08784}}.

\bibitem{2means_nphard2}
Sanjoy Dasgupta and Yoav Freund.
\newblock Random projection trees for vector quantization.
\newblock {\em {IEEE} Trans. Inf. Theory}, 55(7):3229--3242, 2009.
\newblock \href {https://doi.org/10.1109/TIT.2009.2021326}
  {\path{doi:10.1109/TIT.2009.2021326}}.

\bibitem{drezner1987rectangular}
Zvi Drezner.
\newblock On the rectangular p-center problem.
\newblock {\em Naval Research Logistics (NRL)}, 34(2):229--234, 1987.

\bibitem{k2quadratic}
David Eppstein.
\newblock Dynamic three-dimensional linear programming.
\newblock {\em {INFORMS} J. Comput.}, 4(4):360--368, 1992.
\newblock \href {https://doi.org/10.1287/ijoc.4.4.360}
  {\path{doi:10.1287/ijoc.4.4.360}}.

\bibitem{2center_linear2}
David Eppstein.
\newblock Faster construction of planar two-centers.
\newblock In {\em Proceedings of the 8th Annual {ACM-SIAM} Symposium on
  Discrete Algorithms, {SODA}}, pages 131--138. {ACM/SIAM}, 1997.

\bibitem{siam_jc}
Matt Gibson, Gaurav Kanade, Erik Krohn, Imran~A. Pirwani, and Kasturi~R.
  Varadarajan.
\newblock On clustering to minimize the sum of radii.
\newblock {\em {SIAM} J. Comput.}, 41(1):47--60, 2012.
\newblock \href {https://doi.org/10.1137/100798144}
  {\path{doi:10.1137/100798144}}.

\bibitem{DBLP:journals/comgeo/HagauerR97}
Johann Hagauer and G{\"{u}}nter Rote.
\newblock Three-clustering of points in the plane.
\newblock {\em Comput. Geom.}, 8:87--95, 1997.
\newblock \href {https://doi.org/10.1016/S0925-7721(96)00022-3}
  {\path{doi:10.1016/S0925-7721(96)00022-3}}.

\bibitem{DBLP:journals/comgeo/Hershberger92a}
John Hershberger.
\newblock Minimizing the sum of diameters efficiently.
\newblock {\em Comput. Geom.}, 2:111--118, 1992.
\newblock \href {https://doi.org/10.1016/0925-7721(92)90028-Q}
  {\path{doi:10.1016/0925-7721(92)90028-Q}}.

\bibitem{DBLP:journals/comgeo/Hoffmann05}
Michael Hoffmann.
\newblock A simple linear algorithm for computing rectilinear 3-centers.
\newblock {\em Comput. Geom.}, 31(3):150--165, 2005.
\newblock \href {https://doi.org/10.1016/J.COMGEO.2004.12.002}
  {\path{doi:10.1016/J.COMGEO.2004.12.002}}.

\bibitem{kmeans_fpt}
Mary Inaba, Naoki Katoh, and Hiroshi Imai.
\newblock Applications of weighted voronoi diagrams and randomization to
  variance-based \emph{k}-clustering.
\newblock In {\em Proceedings of the 10th Annual Symposium on Computational
  Geometry, {SoCG}}, pages 332--339. {ACM}, 1994.
\newblock \href {https://doi.org/10.1145/177424.178042}
  {\path{doi:10.1145/177424.178042}}.

\bibitem{DBLP:journals/algorithmica/LeeW86}
D.~T. Lee and Ying{-}Fung Wu.
\newblock Geometric complexity of some location problems.
\newblock {\em Algorithmica}, 1(2):193--211, 1986.
\newblock \href {https://doi.org/10.1007/BF01840442}
  {\path{doi:10.1007/BF01840442}}.

\bibitem{kmeans_nphard_plane}
Meena Mahajan, Prajakta Nimbhorkar, and Kasturi~R. Varadarajan.
\newblock The planar k-means problem is {NP}-hard.
\newblock {\em Theor. Comput. Sci.}, 442:13--21, 2012.
\newblock \href {https://doi.org/10.1016/j.tcs.2010.05.034}
  {\path{doi:10.1016/j.tcs.2010.05.034}}.

\bibitem{DBLP:journals/siamcomp/Megiddo83a}
Nimrod Megiddo.
\newblock Linear-time algorithms for linear programming in {$\mathbb{R}^3$} and
  related problems.
\newblock {\em {SIAM} J. Comput.}, 12(4):759--776, 1983.
\newblock \href {https://doi.org/10.1137/0212052} {\path{doi:10.1137/0212052}}.

\bibitem{kcenter_nphard}
Nimrod Megiddo and Kenneth~J. Supowit.
\newblock On the complexity of some common geometric location problems.
\newblock {\em {SIAM} J. Comput.}, 13(1):182--196, 1984.
\newblock \href {https://doi.org/10.1137/0213014} {\path{doi:10.1137/0213014}}.

\bibitem{DBLP:conf/issac/Nussbaum97}
Doron Nussbaum.
\newblock Rectilinear p-piercing problems.
\newblock In {\em Proceedings of the International Symposium on Symbolic and
  Algebraic Computation, {ISSAC}}, pages 316--323, 1997.
\newblock \href {https://doi.org/10.1145/258726.258828}
  {\path{doi:10.1145/258726.258828}}.

\bibitem{DBLP:conf/esa/Segal97}
Michael Segal.
\newblock On piercing sets of axis-parallel rectangles and rings.
\newblock In {\em Proceedings of the Annual European Symposium on Algorithms,
  {ESA} 1997}, pages 430--442, 1997.
\newblock \href {https://doi.org/10.1007/3-540-63397-9\_33}
  {\path{doi:10.1007/3-540-63397-9\_33}}.

\bibitem{2center_linear1}
Micha Sharir.
\newblock A near-linear algorithm for the planar 2-center problem.
\newblock {\em Discret. Comput. Geom.}, 18(2):125--134, 1997.
\newblock \href {https://doi.org/10.1007/PL00009311}
  {\path{doi:10.1007/PL00009311}}.

\bibitem{DBLP:conf/compgeom/SharirW96}
Micha Sharir and Emo Welzl.
\newblock Rectilinear and polygonal \emph{p}-piercing and \emph{p}-center
  problems.
\newblock In {\em Symposium on Computational Geometry ({SoCG} 1996)}, pages
  122--132, 1996.
\newblock \href {https://doi.org/10.1145/237218.237255}
  {\path{doi:10.1145/237218.237255}}.

\bibitem{2center_linear4}
Haitao Wang.
\newblock On the planar two-center problem and circular hulls.
\newblock {\em Discret. Comput. Geom.}, 68(4):1175--1226, 2022.
\newblock \href {https://doi.org/10.1007/s00454-021-00358-5}
  {\path{doi:10.1007/s00454-021-00358-5}}.

\bibitem{DBLP:conf/birthday/Welzl91}
Emo Welzl.
\newblock Smallest enclosing disks (balls and ellipsoids).
\newblock In {\em New Results and New Trends in Computer Science, Proceedings
  [on occasion of H. Maurer's 50th birthday]}, volume 555 of {\em Lecture Notes
  in Computer Science}, pages 359--370. Springer, 1991.
\newblock \href {https://doi.org/10.1007/BFB0038202}
  {\path{doi:10.1007/BFB0038202}}.

\end{thebibliography}

\end{document}